%
%
%
%
%
%
\RequirePackage{fix-cm}
\documentclass[smallextended]{svjour3}       
\smartqed  
\usepackage{graphicx}
\usepackage{upgreek}
%
%
\usepackage[british]{babel}
\usepackage{amsmath,amssymb}
\usepackage{epstopdf}
\usepackage[compatibility=false,font=small,labelfont=bf]{caption} 
\usepackage{subcaption}
\usepackage[usenames,dvipsnames]{color}
\usepackage{xcolor}
\usepackage[normalem]{ulem}
\usepackage{hyphenat}
\usepackage{natbib}
\usepackage{cite}
\usepackage{url}
\usepackage{rotating}
\usepackage{xspace}
\usepackage{booktabs}


\makeatletter
\providecommand*{\cupdot}{%
  \mathbin{%
    \mathpalette\@cupdot{}%
  }%
}
\newcommand*{\@cupdot}[2]{%
  \ooalign{%
    $\m@th#1\cup$\cr
    \sbox0{$#1\cup$}%
    \dimen@=\ht0 %
    \sbox0{$\m@th#1\cdot$}%
    \advance\dimen@ by -\ht0 %
    \dimen@=.5\dimen@
    \hidewidth\raise\dimen@\box0\hidewidth
  }%
}

\providecommand*{\bigcupdot}{%
  \mathop{%
    \vphantom{\bigcup}%
    \mathpalette\@bigcupdot{}%
  }%
}
\newcommand*{\@bigcupdot}[2]{%
  \ooalign{%
    $\m@th#1\bigcup$\cr
    \sbox0{$#1\bigcup$}%
    \dimen@=\ht0 %
    \advance\dimen@ by -\dp0 %
    \sbox0{\scalebox{2}{$\m@th#1\cdot$}}%
    \advance\dimen@ by -\ht0 %
    \dimen@=.5\dimen@
    \hidewidth\raise\dimen@\box0\hidewidth
  }%
}
\makeatother

\DeclareMathOperator{\diag}{diag}
\newcommand{\normal}{\mathfrak{n}} 
\newcommand{\dd}{\partial}
\newcommand{\dif}{\mathop{}\!\mathrm{d}} 
\newcommand{\vect}[1]{\mathbf{#1}}
\newcommand{\mat}[1]{\mathbf{#1}}

\newcommand {\nor} [1]{\left\|#1\right\|}

\newcommand{\D}{\mathcal{D}}
\newcommand{\I}{\mathcal{I}}
\newcommand{\R}{\mathbb{R}}
\newcommand{\N}{\mathbb{N}}
\newcommand{\T}{\mathsf{T}} 
\renewcommand{\P}{\mathcal{P}} 

\newcommand{\Ard}{\mathcal{A}} 
\newcommand{\borel}{\mathfrak{B}} 
\newcommand{\paramSet}{\mathcal{S}} 

\def\PAI1{\mbox{PAI-1}\xspace}
\def\istate{\mbox{$i$-state}\xspace}



\newtheorem{convention}{Convention}
%
\journalname{Journal of Mathematical Biology}
\begin{document}

\title{Structured Models of Cell Migration Incorporating Molecular Binding Processes
}
%


\author{Pia~Domschke \and 
            Dumitru~Trucu \and 
            Alf~Gerisch \and 
            Mark~A.\,J.~Chaplain 
}


\institute{Pia Domschke \and Alf Gerisch \at
              Fachbereich Mathematik, Technische Universit\"at Darmstadt \\
              Dolivostr. 15, 64293 Darmstadt, Germany\\
              \email{domschke@mathematik.tu-darmstadt.de}           \\
              \email{gerisch@mathematik.tu-darmstadt.de}\\
           \and
           Dumitru Trucu \at
              Division of Mathematics, University of Dundee\\
              Dundee DD1 4HN, United Kingdom\\
              \email{trucu@maths.dundee.ac.uk}\\
           \and
           Mark A.\,J. Chaplain \at
	      School of Mathematics and Statistics, Mathematical Institute, University of St Andrews\\
	      St Andrews KY16 9SS, United Kingdom\\
              \email{majc@st-andrews.ac.uk}
}

\date{Received: date / Accepted: date}

\maketitle

\begin{abstract}
The dynamic interplay between collective cell movement and the various molecules involved in the accompanying cell signalling mechanisms plays a crucial role in many biological processes including normal tissue development and pathological scenarios such as wound healing and cancer. Information about the various structures embedded within these processes allows a detailed exploration of the binding of molecular species to cell-surface receptors within the evolving cell population. In this paper we establish a general \emph{spatio-temporal-structural} framework that enables the description of molecular binding to cell membranes coupled with the cell population dynamics. We first provide a general theoretical description for this approach and then illustrate it with two examples arising from cancer invasion. 

\keywords{structured population model \and spatio-temporal model \and cell-surface receptors \and cancer invasion}
\end{abstract}

\section{Introduction}
\label{intro}
 
The modelling of complex biological systems has witnessed extensive developments over the past four decades. Ranging from studying large-scale collective behaviour of inter-linked species in ecological studies to the understanding of the complicated multiscale processes arising in animal and human cell and tissue biology, the modelling has gradually evolved in scope and focus to include not only temporal and spatial coordinates but also structural information of the individual species involved, such as age, size or other relevant quantifiable aspects \citep{Diekmann1982,Metz1986}. 

Spatio-temporal models, in particular reaction-diffusion-taxis systems, have a long history not only in mathematical biology research \citep{Skellam1951} but in the wider applied mathematics community. Such modelling approaches have generally avoided incorporating any structural information in them (e.g. age, size). The development of structured-population models also has a long tradition going back to the seminal work of \citet{Foerster1959}. 
Areas of interest for structured-population modelling include ecology, epidemiology, collective cell movement in normal tissue dynamics and pathological situations, such as malignant solid tumours and leukaemia, to name a few. 
A majority of these models have been concerned with coupling time and structure (e.g. age, size) in individual or collective species dynamics \citep{Trucco1965a,Trucco1965b,Sinko1967,Gyllenberg1982,Diekmann1984,Kunisch1985,Gyllenberg1986,Gyllenberg1987,Tucker1988, Diekmann1992,Diekmann1994,Huyer1994, Calsina1995, Roos1997, Cushing1998,Basse2007,Chapman2007}. Models coupling space and structure were also developed \citep{Gurtin1981,MacCamy1981,Diekmann1982, Garroni1982, Huang1994, Rhandi1998,Langlais2003,Ayati2006a,Delgado2006,Allen2009}, and these have paved the way towards modelling approaches that couple time, space, and structure, opening a new era in the modelling of biological processes \citep{Di-Blasio1979, Busenberg1983,Langlais1988, Fitzgibbon1995, Rhandi1999,So2001,Al-Omari2002,Cusulin2005,Deng2006}. 

Central to the study of structured population models, is the role played by the semigroup framework \citep{Webb1985,Metz1986,Gyllenberg1990,Diekmann1992}.  
Approaches based on delay-differential equations explore the behaviour of the system under consideration in the presence of age, size, or various other appropriate structural information \citep{Mackey1977, Angulo2012}. Questions regarding the  spatio-structural controllability in single species population models have also been addressed by \citet{Gyllenberg1983,Ainseba2000, Ainseba2001,Gyllenberg2002}. Discrete spatial or temporal and continuous in structure models have been equally employed to understand various ecological processes \citep{Gyllenberg1997,Gyllenberg1997a,Matter2002,Camino-Beck2009, Lewis2010}.
These methodologies have been recently complemented with novel measure theory approaches such as the ones proposed by \citet{Gwiazda2010}.
Finally, numerical explorations and computational simulations have also become increasingly present within the range of methods for the analysis of temporal-structural, spatio-structural, and \emph{spatio-temporal-structural} models \citep{Ayati2000, Ayati2006, Ayati2002,Abia2009}.     

Of recent interest is the exploration of structural information within the context of modelling the complex links between cell movement and the cascade of signalling pathway mechanisms appearing within diseases like cancer, both in malignant solid tumours \citep{Basse2003,Basse2004,Basse2005,Basse2007,Ayati2006,Daukste2012, Gabriel2012} and leukaemia \citep{Bernard2003,Foley2009,Roeder2009}, as well as in hematopoetic diseases such as autoimmune hemolytic anemia \citep{Belair1995,Mahaffy1998}.

Although much progress has been made through \emph{in vivo} and \emph{in vitro} research, understanding more deeply the cross-talk between signalling molecules and the individual and collective cell dynamics in human tissue remains a major challenge for the scientific community. The development of a suitable theoretical framework coupling dynamics at the cell population level with dynamics at the level of cell-surface receptors and molecules is crucial in understanding many important normal and pathological cellular processes. To this end, despite all the experimental advancements, mathematical modelling coupling cell-scale structural information with spatial and temporal dynamics is still in its very early days, with only a few recent works on the subject such as those proposing an age-structured spatio-temporal haptotaxis modelling in tumour progression \citep{Walker2007,Walker2008,Walker2009} as well as those addressing the link between age structure and cell cycle and proliferation \citep{
Gabriel2012,Billy2014} or exploring the role of membrane inhomogeneities for individual cells' deformation mechanics \citep{Mercker2013}. However, none of these modelling attempts have addressed so far the coupling between the collective cell movement and the structural binding behaviours enabled by the various molecular signalling pathways that may come under consideration in the overall tissue dynamics.

In general, modelling the coupling between the collective cell dynamics and the contribution of the structural parameters of the signalling molecules travelling along with the moving cell population remains a difficult open question. In this work, we address this question by establishing the fundamentals of a general framework that captures the overall coupled interaction of a \emph{spatio-temporal-structural} cell population density accompanied by a number of binding spatio-temporal molecular species concentrations. This explores the binding, activation and inhibition processes between cell surface-bound and free molecular species and their effect on the overall cell-population dynamics.
 
The paper is structured as follows. In Section~\ref{sec:generalModel} we introduce the framework by deriving from the first principle the general structured model. 
From this, we will derive a corresponding non-structured model by integrating over the structure space. 
Section~\ref{sec:numericalExample} is dedicated to a generic model of cancer invasion. We show the influence of the structure on this very simple model and compare it to the corresponding non-structured model via numerical examples. In Section~\ref{sec:uPASystem}, this novel framework is applied to the more involved case of the uPA system, an enzymatic system that plays an important role in cancer invasion. Finally, in Section~\ref{sec:conclusion} we discuss the new framework and give insights for further developments. 


\section{General Spatio-Temporal-Structured Population Framework} \label{sec:generalModel}

In this section we establish a general framework for our spatio-temporal-structured population model that enables the coupling of cell surface-bound reaction processes with the overall cell population dynamics.

Let $\D\subset \R^d$, $d\in\{1,2,3\}$, be a bounded spatial domain,  $\I = (0,T]$, $0<T\in\R$, be the time interval, and $\P\subset\R^p$, $p\in\N$, be a convex domain of admissible structure states that contains $\boldsymbol 0 \in\R^{p}$ as accumulation point. The set~$\P$ will be referred to as the \istate space \citep{Metz1986} (= \emph{individual's state}). 
Here the temporal, spatial, and structural variables are~$t$,~$x$, and~$y$, respectively. 
Our basic model consists of the following dependent variables: 

\begin{itemize}
\item the structured cell density $c(t,x,y)$, with $(t,x,y) \in \I\times\D\times\P$;
\medskip
\item the extracellular matrix (ECM) density $v(t,x)$, with $(t,x) \in \I\times\D$; 
\medskip
\item $q$ free molecular species, of concentration $m_{i}(t,x)$, with $(t,x) \in \I\times\D$, $i=1,\dots,q$, which may be written in vector notation 
\[
\vect m = (m_1,\dots,m_q)^\T:\I\times\D\to \R^q\,.
\]
\end{itemize}
We consider that~$p$ of the free molecular species are able to bind to the surface of the cells; without loss of generality, these are~$m_{i}$, $i=1,\dots,p$, with $p\leq q$. 
Note that the number $p$ of molecular species being able to bind to a cell's surface corresponds to the dimension of the \istate  space $\P$.
Similar to size-structured population models, see for example \citet{Chapman2007,Diekmann1984} or \citet{Tucker1988}, we model the surface concentration of bound molecules on the surface of the cells by the structure or \istate variable $y\in\P$. This gives rise to the structured cell density $c(t,x,y)$, which denotes the cell number density at a time $t$ of cells at a spatial point $x$ that have a surface concentration equal to $y$ of molecules bound to their surface. Hence, the unit of $c(t,x,y)$ is number of cells per unit volume in space (at $x$) per unit volume in the \istate (at $y$). The surface concentrations~$y_i$, $i=1,\dots,p$, are measured in $[\upmu \rm{mol/cm^2}]$, which yields a unit volume in the $p$-dimensional \istate~$y$ of $[(\upmu \rm{mol/cm^2})^p]$ and thus the unit of the structured cell density $c$ is given by $[\rm{cells/(cm^3\cdot (\upmu \rm{mol/cm^2})^p}]$.

The total, that is non-structured, cell density $C$ at~$t$ and~$x$ is then obtained by integrating the structured cell density over all \istate{}s $y\in\P$,
\begin{align}
   C(t,x) &= \int\limits_\P c(t,x,y) \dif y\,, \label{equ:totalAmountOfCells}
\end{align}
and its unit is therefore given in $[\rm{cells/cm^3}]$.

The \emph{structured cell surface density}~$s(t,x,y)$, in contrast to the above structured cell \emph{number} density, gives, per unit volume in space and per unit volume in the \istate, the surface area of the cells at $t$ and $x$ which have surface concentration $y$. Let us assume that all cells have the same fixed cell surface area~$\varepsilon$ with unit $[\rm{cm^2/cell}]$. Then the structured cell surface density can be expressed as
\begin{align}
   s(t,x,y) &= \varepsilon c(t,x,y)\, \label{equ:surfaceDensity}
\end{align}
and has unit $[\rm{cm^{2}/(cm^3\cdot (\upmu mol/cm^2)^p)}]$.

We are also interested in the \emph{bound molecular species volume concentration} at given $t$ and $x$, denoted by $\vect n(t,x)$. 
Multiplication of the structured cell surface density $s(t,x,y)$ with the respective surface concentration $y$ yields the \emph{structured volume concentration} of the bound molecular species per unit volume in the \istate.
Thus, integration of this structured volume concentration over the \istate space $\P$ yields the desired bound molecular species volume concentration, i.e.,
\begin{align}
   \vect n(t,x) &= (n_1(t,x), \dots, n_p(t,x))^\T:= \int\limits_{\P} y s(t,x,y) \dif y  \quad \in \R^p\,. \label{equ:totalConcentrationReactants}
\end{align}
The unit of~$n_i$, $i=1,\dots,p$, is $[\rm{\upmu mol/cm^3}]$ = $[\rm{n M}]$, which is the same as the unit for the \emph{free molecular species volume concentrations} $m_j$, $j=1,\dots,q$.

Finally, by the density of the ECM we refer to the \emph{mass density of the fibrous proteins inside the ECM}, for example collagen, hence the unit of the ECM density is $[\rm{mg/cm^3}]$.

For a compact notation, we define the combined vector of the structured cell density and the ECM density 
as well as the combined vector of bound and free molecular species volume concentrations by
\begin{align}
  \label{equ:def_u_r}
  \vect u(t,x) &:= \begin{pmatrix} c(t,x,\cdot) \\[1mm] v(t,x)\end{pmatrix} : \P\to\R^2
  \quad\text{and}\quad 
  \vect r(t,x)  := \begin{pmatrix} \vect n(t,x) \\[1mm] \vect m(t,x) \end{pmatrix} \in \R^{p+q}\,, 
\end{align}
respectively.

Since some of the processes modelled are limited due to spatial constraints, we define the \emph{volume fraction of occupied space} by
\begin{align}
   \hat \rho(t,x) \equiv \rho(C(t,x),v(t,x)) := \vartheta_c C(t,x) + \vartheta_v v(t,x) \label{equ:rho}
\end{align}
with suitable parameters~$\vartheta_c$ and~$\vartheta_v$.
Note that with this definition we assume the amount of free and bound molecular species to be negligible for the volume fraction of occupied space.
In the following, we discuss the model equations for the evolution of~$c$, $v$ and~$\vect m$.

\begin{remark}
The quantities defined above can be interpreted in a measure-theo\-retic framework and so can terms in the equations presented in the following subsections. 
For example, the bound molecular species volume concentrations $\vect n(t,x)$, see Eq.~\eqref{equ:totalConcentrationReactants}, can be seen as an expected value and the definition of the binding and unbinding rates in the structural flux, see the discussion in the end of Section~\ref{sec:generalCellPopulation} below, becomes more general in such a context.
We refer the interested reader to Appendix~\ref{app:bindingMeasures}, where we elucidate these issues in some detail.
\end{remark}

\subsection{Cell population} \label{sec:generalCellPopulation}

Consider, inside the \emph{spatio-structural} space~$\D\times\P$, an arbitrary control volume $V\times W$, satisfying that $V$ and $W$ are compact with piecewise smooth boundaries $\partial V$ and $\partial W$. The total amount of cells in~$V\times W $ at time~$t$ is 
\[
   C_{V\times W} (t) = \int\limits_W \int\limits_V c(t,x,y) \dif x \dif y\,.
\]
Per unit time, the rate of change in~$C_{V\times W}$ is given by the combined effect of the sources of cells of the structural types considered over the control volume and the flux of cells into the control volume over the spatial and structural boundaries. Therefore, we have the integral form of the balance law given by
\begin{align}
\begin{split}
   \frac{d C_{V\times W}}{dt} =& \underbrace{\int\limits_W\int\limits_V S(t,x,y) \dif x \dif y }_{\text{source}} 
   - \underbrace{\int\limits_W \int\limits_{\partial V} F(t,x,y)\cdot \normal(x) \dif \sigma_{n-1}(x) \dif y }_{\text{flux over spatial boundary}} \\
   &- \underbrace{\int\limits_V \int\limits_{\partial W} G(t,x,y)\cdot \normal(y) \dif \sigma_{p-1}(y) \dif x }_{\text{flux over structural boundary}}\,,
\end{split} \label{equ:generalModelDerivation}
\end{align}
where~$\sigma_{n-1}$ and~$\sigma_{p-1}$ are the surface measures on~$\partial V$ and~$\partial W$, respectively.
Assuming that the vector fields $F$ and $G$ are continuously differentiable and since $V$ and $W$ are compact with piecewise smooth boundaries, the divergence theorem yields
\begin{align}
\begin{split}
 \frac{d C_{V\times W}}{dt} = &\int\limits_W\int\limits_V S(t,x,y) \dif x \dif y 
   - \int\limits_W \int\limits_V \nabla_x \cdot F(t,x,y) \dif x\dif y \\
   &- \int\limits_V \int\limits_W \nabla_y \cdot G(t,x,y) \dif y \dif x\,.
\end{split} \label{equ:generalModelIntegralForm}
\end{align}
Assuming further that~$c$ and~$c_t$ are continuous, 
Leibniz's rule for differentiation under the integral sign \citep{Halmos1978} gives 
\begin{align}
\begin{split}
   \int\limits_W \int\limits_V \frac{\dd}{\dd t} c(t,x,y) \dif x \dif y 
   =& \int\limits_W\int\limits_V S(t,x,y) \dif x \dif y 
   - \!\int\limits_W \int\limits_V \nabla_x \cdot F(t,x,y) \dif x\dif y \\
   &- \int\limits_W \int\limits_V \nabla_y \cdot G(t,x,y) \dif x \dif y\,.
\end{split} \label{equ:generalModelIntegralForm2}
\end{align}
Since this holds for arbitrary control volumes $V\times W$, we obtain the following partial differential equation, i.e. the corresponding differential form of the balance law for the structured cell density:
\begin{align}
   \frac{\dd}{\dd t} c(t,x,y) &= S(t,x,y) - \nabla_x \cdot F(t,x,y) - \nabla_y \cdot G(t,x,y)\,. 
   \label{equ:generalPDECells}
\end{align}
This form is similar to models of velocity-jump processes, where the \istate describes the velocity and potentially other internal states of an individual, see, for example, \citet{Othmer1988,Erban2005,Xue2009,Xue2011,Kelkel2012,Othmer2013,Engwer2015,Xue2015}.

\paragraph{Source.} The source of the cell population is given by the proliferation of the cells through cell division (there may be other cell sources, even negative ones such as apoptosis, but here we only consider cell division). Let~$\Phi(y,\vect u)$ be the rate at which cells undergo mitosis (proliferation rate).
Similar to equal mitosis in size-structured populations as was considered by \citet{Perthame2007}, we assume that, as cells divide, the daughter cells share the~$p$ different molecular species on their surface equally. That means that a cell at~$(t,x,y)$ divides into two cells at~$(t,x,\frac12 y)$, and a schematic of this can be seen in Figure~\ref{fig:mitosis}. 

\begin{figure}[htb]
\centering
\includegraphics[width=\textwidth]{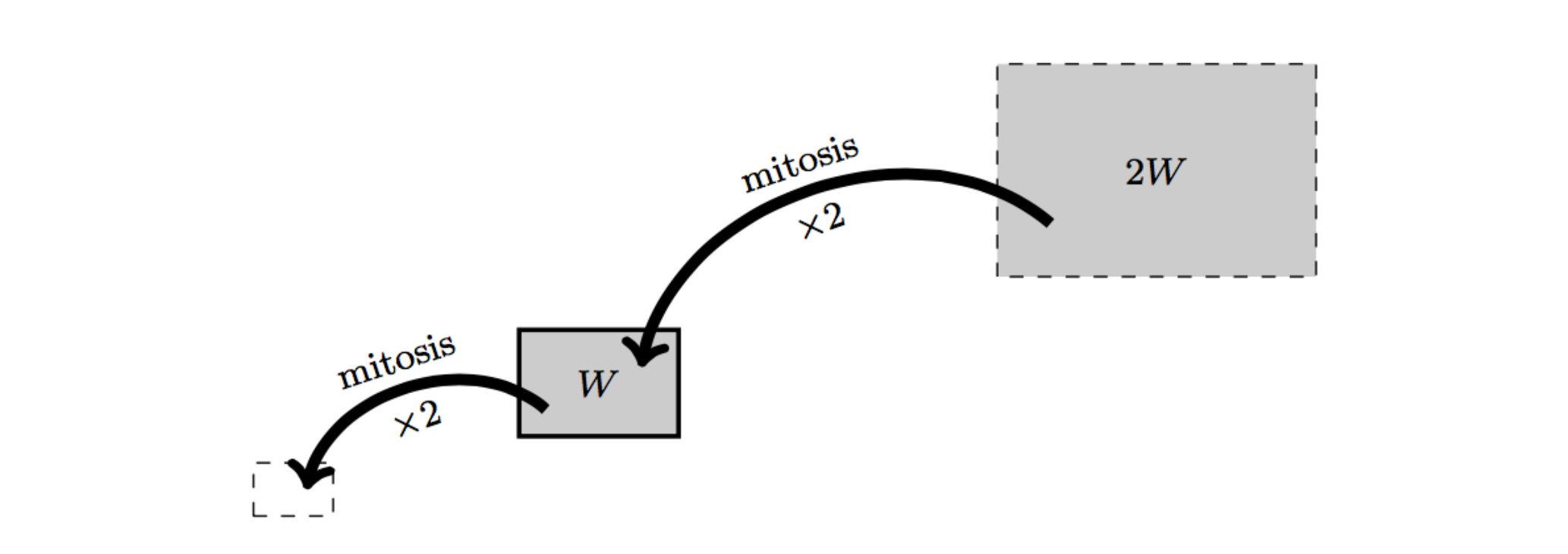}
\caption{Individuals leaving and entering the control volume $W \subset\P$ through mitosis.}
\label{fig:mitosis}
\end{figure}

As was done in \citet{Metz1986}, we impose the following
\begin{convention}\label{conv:istate}
If a transformed \istate argument falls outside $\P$ we shall assume that the term in which it occurs equals zero.
\end{convention}

Then, for an arbitrary control volume $W\subset\P$, the source of cells in $W$ is given by
\begin{align}
   \int\limits_W \!S(t,x,y) \dif y 
   &= 2 \!\!\int\limits_{2W}\!\! \Phi(\tilde y,\vect u(t,x,\tilde y)) c(t,x, \tilde y)\dif \tilde y 
      - \!\!\int\limits_W\!\! \Phi(y,\vect u(t,x,y)) c(t,x,y) \dif y\,. \label{equ:sourceTermIntegral}
   \intertext{Equation~\eqref{equ:sourceTermIntegral} is obvious if $W$, $2W$, and $\frac12 W$ are pairwise disjoint. For the general case with arbitrary $W$ we refer to the proof in Appendix~\ref{app:sourceTerm}.
   For the integral over~$2W$, we use the change of variables $\tilde y(y)=2y$, for which $\det(J_{\tilde y})=2^{p}$, and obtain}
   \int\limits_W\! S(t,x,y) \dif y 
   &= \!\!\int\limits_{W}\!\! 2^{p+1} \Phi(2y,\vect u(t,x,2y)) c(t,x,2y)\dif y 
     - \!\!\int\limits_W\!\! \Phi(y,\vect u(t,x,y)) c(t,x,y) \dif y\,. \nonumber
\end{align}
Since this holds for any control volume $W$, we get
\begin{align}
   S(t,x,y) = 2^{p+1} \Phi(2y,\vect u(t,x,2y)) c(t,x,2y) - \Phi(y,\vect u(t,x,y)) c(t,x,y)\,, \forall y\in \P. \label{equ:sourceTerm}
\end{align}

\paragraph{Spatial flux.}
The flux over the spatial boundary results from a combination of diffusion (random motion), chemotaxis (with respect to various free molecular species volume concentrations), and haptotaxis (with respect to the ECM density) of the structured cell population. 
Here we define the diffusion and taxis terms following \citet{Andasari2011, Gerisch2008} as
\begin{align}
\label{equ:spatialFlux}
   F(t,x,y) &= -D_c\nabla_x c + c (1-\rho(C,v))\left(\sum_{k=1}^{q} \chi_k \nabla_x m_k + \chi_v \nabla_x v \right)\,,
\end{align}
where the free molecular species with volume concentration~$m_k$ may either act as chemoattractants or as chemorepellents. 
We assume that the diffusion coefficient~$D_c(\cdot)$ as well as the taxis coefficients~$\chi_v(\cdot)$  and~$ \chi_k(\cdot)$, $k=1,\dots,q$, can, in particular, depend on the \istate~$y\in\P$.
More complex forms of~\eqref{equ:spatialFlux} are indeed conceivable and we provide an initial discussion in Section~\ref{sec:conclusion}.

\paragraph{Structural flux.}
The flux over the structural boundary represents changes in the \istate, that is changes in the surface concentration of bound molecules on the cells' surface, and thus results from binding and unbinding events of molecules to and from the cells' surface.

We assume that the binding rates of the free molecular species $m_1,...,m_p$ to the cell surface depend on the already bound molecules on the cell surface, i.e. the \istate~$y$, as well as on the available free molecules, i.e. the free molecular species volume concentration~$\vect m(t,x)$. 
Thus we denote the non-negative binding rate vector by $\vec{b}(y,\vect{m})\in\R^p$.
In contrast, we assume that the unbinding rates only depend on the \istate~$y$, which implies that unbinding is not restricted by~$\vect m(t,x)$.
Thus we denote the non-negative unbinding rate vector by $\vec{d}(y)\in\R^p$.
In summary, these binding and unbinding rates lead to an associated \emph{net binding rate} for the \istate~$y$ given by
$
   \vec{b}(y,\vect{m}) - \vect d(y)\,.
$
The net binding rate describes an amount of molecules bound per surface area per unit time, hence the unit of this rate is given by~$[\rm{(\upmu mol/cm^2)/s}]$.

Since the \istate space $\P$ is defined as the set of all admissible structure states, it is necessary that the net binding rate vector field does not point out of $\P$ on $\partial\P$, i.e. that
\begin{align}
 \left(\vect b(y,\vect m) - \vect d(y)\right) \normal(y) &\leq 0 \qquad \text{ for } t\in\I\,,\,\, x\in\overline{\D}\,,\,\, y\in\partial\P\,, \label{equ:velocityField-noCrossing}
\end{align}
where $\normal(y)$ denotes the outer unit normal vector on $\partial\P$ in $y\in\partial\P$.
This condition must be fulfilled by the particular choice of $\vect b$ and $\vect d$ in specific models.

Now the flux is given by the product of the structured cell density and the net binding rate, hence has the form
\begin{align}
   G(t,x,y) &= c(t,x,y) \bigl(\vec{b}(y,\vect{m}) - \vect d(y) \bigr)\,. \label{equ:structuralFlux}
\end{align} 
This form maintains the interpretation of the structural flux $G(t,x,y)$ as, for example, growth in size-structured populations \citep{Chapman2007,Tucker1988,Metz1986,Webb2008}.

\subsection{Extracellular matrix}\label{sec:ECM}

The extracellular matrix (ECM) consists of fibrous proteins such as collagen or vitronectin. These proteins are assumed to be static, i.e. we do not consider any transport terms for the ECM. The ECM is degraded by one or more of the free molecular species or the surface-bound reactants and is remodelled by the stroma cells present in the tissue (which are not modelled explicitly).
The equation for the ECM is then
\begin{align*}
   \frac{\dd}{\dd t} v(t,x) &= - \underbrace{\boldsymbol\delta_v^\T \vect r(t,x) v(t,x)}_{\text{degradation}} + \underbrace{\psi_v(t,\vect u(t,x))}_{\text{remodelling}}\,,
\end{align*}
where~$\boldsymbol\delta_v \in \R^{p+q}$ is the non-negative vector of ECM degradation rates and $\psi_v(t,\vect u)$ represents the remodelling term. To ensure non-negativity of the ECM density, we require $\psi_v(t,\vect u) \geq 0$ for $v=0$. A common formulation for the remodelling term is a constant rate together with a volume-filling term, see, e.g., \citet{Domschke2014},
\begin{align}
   \psi_v(t,\vect u(t,x)) &= \mu_v (1-\rho(C(t,x),v(t,x)))^+\,. \label{equ:ECMRemodelling}
\end{align}

\subsection{Molecular species}\label{sec:motileSpecies}

We assume that the free molecular species, as described by their volume concentrations~$m_i$, $i=1,\dots,q$, rearrange spatially driven by diffusion only. 
Furthermore, they are produced by either the cells directly or by chemical reactions. 
Potentially, some of the species undergo natural decay. 
The first~$p$ species may also bind to and unbind from the cell surface. 
All these effects can be captured in the following equation describing the dynamics of all free molecular species volume concentrations:
\begin{align}
\begin{split}
   \frac{\dd}{\dd t} \vect m(t,x) &= \underbrace{\nabla_x \cdot \left[\vect D_{\vect m} \nabla_x \vect m(t,x)\right]}_{\text{diffusion}}
            - \underbrace{\int\limits_\P \left(\hat{\vect b}(y,\vect m) - \hat{\vect d}(y)\right)s(t,x,y)\dif y}_{\text{binding/unbinding}} \\
            &\quad+ \underbrace{\boldsymbol \psi_{\vect m}(\vect u(t,x), \vect r(t,x))}_{\text{production}}
            - \underbrace{\diag(\boldsymbol \delta_{\vect m}) \vect m(t,x)}_{\text{decay}}\,.
\end{split}\label{equ:motileSpecies}          
\end{align}
In the above, 
$\vect D_{\vect m} = \diag(D_{m_1}, \dots, D_{m_q}) \in \R^{q,q}$ denotes the diagonal matrix containing the non-negative diffusion constants of the individual species.
Furthermore,~$\boldsymbol\psi_{\vect m}(\vect u, \vect r)$ is the vector of production terms, which depends on the structured cell and ECM densities and the free as well as bound molecular species volume concentrations.
This production term is in particular \istate-dependent, explicitly through $c$ and implicitly through $\vect n$, and thus provides influence of the structure on the dynamics of the overall system. 
This is a strong feature of our structured modelling framework and the necessity of such a feature provided the main initial motivation to consider a structured approach. 
The modelling examples in Sections~\ref{sec:numericalExample} and~\ref{sec:uPASystem} will highlight this in more detail.
In order to ensure the non-negativity of $\vect{m}$, we require, for $j=1,\dots,q$, that~$\left(\boldsymbol\psi_{\vect m}(\vect u, \vect r)\right)_j\ge0$ if $m_j=0$.
Next, the vector~$\boldsymbol \delta_{\vect m}$ contains the non-negative rates of decay of the individual species.
Finally, we discuss the reasoning behind the remaining binding/unbinding term in more detail below.

The rate of change of $\vect m$ due to binding or unbinding events to the cell surface is zero for the components $p+1, p+2,\dots,q$ since those do not bind to the cell surface. 
Thus we will derive the appropriate rate of change for the first $p$ components below.
For a unified treatment of all components, however,
we extend the binding and unbinding rate vectors by zeros, that is we define
\begin{align}
  \label{eqn:ext_bin_unbind_rate}
   \hat {\vect b}(y,\vect m) = \begin{pmatrix}  \vect b(y,\vect m) \\[1mm] \vect 0\end{pmatrix} \in \R^q 
   \quad\text{and}\quad \hat{\vect d}(y) = \begin{pmatrix}\vect d(y)\\ \vect 0\end{pmatrix} \in \R^q\,.
\end{align}
The rate of change of the volume concentration $\vect m$ due to binding or unbinding events to cell surfaces is the combined effect of the corresponding rates of change per \istate; thus the binding/unbinding term in~\eqref{equ:motileSpecies} is an integral over the \istate space.
The rate of change of the volume concentration $\vect m$ due to binding/unbinding to/from cell surfaces in \istate~$y$ can be seen as the product of the net binding rate $\hat{\vect b}(y,\vect m) - \hat{\vect d}(y)$,
which gives the amount of molecules being bound per surface area per unit time ($[\rm{(\upmu mol/cm^2)/s}]$), and the structured cell surface density $s(t,x,y)$, which denotes, per unit volume in space and per unit volume in the \istate, the surface area of the cells at $t$ and $x$ that have surface concentration $y$ ($[\rm{cm^{2}/(cm^3\cdot (\upmu mol/cm^2)^p)}]$).
This explains the integrand in the binding/unbinding term in~\eqref{equ:motileSpecies}.
Furthermore note that a positive component $j$ of the net binding rate means that the volume concentration $m_j$ decreases and thus the minus sign in front of the integral is required. 
Finally, observe that the earlier conditions on the binding rate vector $\vect b$ ensure non-negativity of $\vect m$.

\subsection{Summary of the model, non-dimensionalisation, initial and boundary conditions}

For the convenience of the reader, we summarize below the equations of the structured model for the structured cell density, the ECM densitiy, and the free molecular species volume concentrations as they have been derived in Sections~\ref{sec:generalCellPopulation},~\ref{sec:ECM}, and~\ref{sec:motileSpecies}, respectively:
\begin{subequations}
\begin{align}
      \begin{split}
         \frac{\dd c}{\dd t} &= \nabla_x \cdot \left[D_c\nabla_x c 
         - c (1-\rho(C,v))\left(\sum_{k=1}^q \chi_k \nabla_x m_k + \chi_v \nabla_x v\right)\right] \\[1mm]
     &\qquad  - \nabla_y\cdot\left[\left(\vect b(y,\vect m) - \vect d(y)\right) c\right] \\[2mm]
      &\qquad + 2^{p+1} \Phi(2y,\vect u(t,x,2y)) c(t,x,2y)  - \Phi(y,\vect u(t,x,y)) c(t,x,y)\,,
      \end{split} \label{equ:generalModel-cells}\\[5mm]
     \frac{\dd v}{\dd t} &= -\boldsymbol\delta_v^\T \vect r v + \psi_v(t,\vect u)\,, \label{equ:generalModel-ecm}\\[5mm]
     \begin{split}
     \frac{\dd \vect m}{\dd t} &= \nabla_x \cdot \left[\mat D_{\vect m} \nabla_x \vect m\right] - 
            \int\limits_\P \left(\hat{\vect b}(y,\vect m) - \hat{\vect d}(y)\right) s \dif y \\[1mm]
            &\qquad + \boldsymbol \psi_{\vect m}(\vect u,\vect r) - \diag(\boldsymbol \delta_{\vect m}) \vect m\,.
     \end{split}\label{equ:generalModel-motile}
\end{align}\label{equ:generalModel}
\end{subequations}

\noindent
In the above, we have suppressed the arguments $(t,x)$ and $(t,x,y)$ except in the proliferation term in Eq.~\eqref{equ:generalModel-cells} where it is necessary to show its dependence on $2y$.

We non-dimensionalise system~\eqref{equ:generalModel} by using the following dimensionless quantities
\begin{align}
\begin{aligned}
   \tilde t &= \frac{t}{\tau},\
 & \tilde{x} &= \frac{x}{L},\
 & \tilde{y} &= \frac{y}{y_\ast},\ \\
   \tilde{c}(\tilde{t},\tilde{x},\tilde{y}) &= \frac{c(t,x,y)}{c_\ast},\ 
 & \tilde{v}(\tilde{t},\tilde{x}) &= \frac{v(t,x)}{v_\ast},\  
 & \tilde{\vect m}(\tilde{t},\tilde{x}) &= \frac{\vect m(t,x)}{m_\ast}. 
\end{aligned}   \label{equ:scalings}
\end{align}
The scaling parameters are given in Appendix~\ref{app:ParameterTables} and the appropriate non-dimensionalised model parameters are collected there in Table~\ref{tab:parameters}. The units and non-dimensionalisation of intermediate quantities are shown in Table~\ref{tab:intermediateQuantities}. 
With the scalings defined in~\eqref{equ:scalings}, the system obtained by non-dimensionalisation of~\eqref{equ:generalModel} looks identical to the original one, but with a tilde on each quantity.
For notational convenience we will omit the tilde signs in the following but will consider system~\eqref{equ:generalModel} as the non-dimensionalised system and always refer to non-dimensionalised quantities.

System~\eqref{equ:generalModel} is supposed to hold for $t\in\I$, $x\in\D$ and $y\in\P$ and is completed by 
initial conditions
\begin{align}
 c(0,x,y) &= c_0(x,y)\,, & v(0,x) &= v_0(x)\,, & \vect{m}(0,x) &= \vect{m}_0(x) & \text{ for } x\in\overline\D, y\in\overline\P\,,
\end{align}
and zero-flux boundary conditions in space, that is
\begin{align}\label{eqn:BC-space}
\begin{split}
     \left[D_c\nabla_x c 
         - c (1-\rho(C,v))\left(\sum_{k=1}^q \chi_k \nabla_x m_k + \chi_v \nabla_x v\right) 
         \right] \cdot \normal(x) = 0\,,&\\
   \left[\mat D_{\vect m} \nabla_x \vect m\right] \cdot \normal(x) = \vect 0\,,&\\
    \text{ for } t\in\I\,,\,\, x\in\partial\D\,,\,\, y\in\overline{\P}\,,&
\end{split} 
\end{align}
where~$\normal(x)$ denotes the unit outer normal vector on~$\partial \D$ in~$x\in\partial\D$. 

Since the equation for the structured cell density~\eqref{equ:generalModel-cells} is hyperbolic in the \istate variable, we can only impose boundary conditions on the inflow boundary part of~$\P$, i.e., where $\left[\vect{b}(y,\vect{m})-\vect{d}(y)\right]\cdot \normal(y)< 0$ holds. Here, $\normal(y)$ denotes the unit outer normal vector on~$\partial \P$ in~$y\in\partial\P$. 
Clearly, the inflow boundary part of~$\P$ may change with $(t,x)$ through changes in $\vect m(t,x)$ and is thus denoted and defined by
\begin{align}
\partial\P_{in}(t,x):=\left\{y\in\partial\P:\left[\vect{b}(y,\vect{m}(t,x))-\vect{d}(y)\right]\cdot \normal(y)< 0\right\}\,.
\label{equ:inflowBoundary}
\end{align}
Since we assume that no cells with \istate{s}  
outside~$\P$ exist, we impose a zero Dirichlet boundary condition on the inflow boundary of the \istate space, that is
\begin{align}
   c(t,x,y) &= 0 \qquad \text{ for } t\in\I, x\in\overline{\D}, y\in\partial\P_{in}(t,x)\,. \label{equ:BC-iState}
\end{align}
Recall that, according to our modeling, cells in \istate $y\in\P$ divide into cells in \istate $y/2\in\P$ since $\P$ is convex with accumulation point $0$. Thus the proliferation term in the structured cell density equation does not create cells on the boundary of $\P$ and is thus consistent with the above zero Dirichlet boundary condition on $\partial\P_{in}(t,x)$.

On the part of $\partial\P$, where we do not have an inflow situation, i.e. where we cannot prescribe boundary conditions, the flux in outer normal direction is zero, which follows directly from \eqref{equ:velocityField-noCrossing}. On the inflow boundary $\partial\P_{in}$, where we impose zero Dirichlet boundary conditions, the flux in outer normal direction is also zero and hence, for the whole boundary of the \istate space~$\P$ it holds that 
\begin{align}
 \left[\left(\vect b(y,\vect m) - \vect d(y)\right) c\right] \normal(y) &= 0 \qquad \text{ for } t\in\I\,,\,\, x\in\overline{\D}\,,\,\, y\in\partial\P\,. \label{equ:zeroFlux-iState}
\end{align}
We provide a more detailed discussion of these boundary conditions in the presentation of the specific models in Sections~\ref{sec:numericalExample} and~\ref{sec:uPASystem}.

\subsection{Derivation of a non-structured model corresponding  to~\eqref{equ:generalModel}}
\label{sec:generalDerivationNonStructuredModel}

The total cell density $C(t,x)$ is obtained by integrating the structured cell density $c(t,x,y)$ over the \istate-space $\P$.
The aim of this section is to take the structured model~\eqref{equ:generalModel} as a starting point and to derive a suitable, corresponding non-structured model. 
That model will be formulated exclusively in terms of the non-structured quantities $C(t,x)$, $v(t,x)$, and $\vect m(t,x)$. 
Please note that $v(t,x)$, and $\vect m(t,x)$ in the non-structured model will not be identical with the variables of the same name in the structured model because their defining equations will be different since structured terms need to be approximated by non-structured ones.
However, their principle meaning will be the same and thus we chose to also stick with the same variable names.

In the derivation of the non-structured model below it is necessary to approximate terms involving structured expressions with expressions which involve only variables of the non-structured model.
For our purposes here, this will be achieved, in general, by replacing structured terms  by their \istate mean as well as the structured cell density by its mean value with respect to the \istate{} space;
higher-order approximations of the latter are of course possible and we comment on these in the conclusion in Section~\ref{sec:conclusion}.
In order to proceed, we first define the mean structured cell density and the centre of mass of the \istate space $\P$ by,
\begin{equation*}
  \bar{c}(t,x):=\frac{1}{|\P|}\int_\P c(t,x,y) \dif y = \frac{1}{|\P|} C(t,x)
  \quad\text{and}\quad
  \bar y = \frac{1}{|\P|}\int_\P y\dif y\,,
\end{equation*}
respectively.

The parameters $D_c(y)$, $\chi_k(y)$ for $k=1,\dots,1$, and $\chi_v(y)$ of the spatial flux expression~\eqref{equ:spatialFlux} are replaced by their mean values over the \istate space. 
These constants are denoted by $\bar D_c$, $\bar \chi_k$, and $\bar \chi_v$, respectively.
Also, the (extended) binding and unbinding rate vectors, $\hat{\vect{b}}(y,\vect m)$ and $\hat{\vect{d}}(y)$, respectively, see~\eqref{eqn:ext_bin_unbind_rate}, are replaced by their \istate-means, which are denoted by~$\overline{\hat{\vect{b}}}(\vect m)$ and~$\overline{\hat{\vect{d}}}$, respectively.

The situation is different and more involved in, for example, the bound molecular species volume concentrations $\vect n$, since its defining expression depends on the \istate $y$ explicitly but also implicitly through the structured cell density $c(t,x,y)$.
In this case we replace $c(t,x,y)$ by its \istate mean $\bar{c}(t,x)$ and obtain the following approximation
\[
 \vect{n}(t,x) 
          = \int_{\P} y \varepsilon c(t,x,y) \dif y
          \approx \varepsilon C(t,x) \frac{1}{|\P|} \int_{\P} y \dif y
          = \varepsilon \bar y C(t,x)
          =: \bar{\vect{n}} (t,x)\,.
\]
The new quantity $\bar{\vect{n}} (t,x)$ is computable from non-structured quantities and can thus be used in the non-structured model.
We are now in the position to introduce the following non-structured versions of $\vect u$ and $\vect r$
\[
  \bar{\vect{u}}(t,x) := \begin{pmatrix}\bar{c}(t,x) \\ v(t,x) \end{pmatrix}
  \quad\text{and}\quad
  \bar{\vect{r}}(t,x) := \begin{pmatrix}\bar{\vect{n}}(t,x) \\ \vect{m}(t,x) \end{pmatrix}\,.
\]

We can now further approximate the proliferation rate $\Phi(y,\vect u)$ as follows
\[
  \Phi(y,\vect u) \approx \Phi(y,\bar{\vect u}) \approx \bar\Phi(\bar{\vect u})\,,
\]
where the first approximation is the replacement of $c(t,x,y)$ by $\bar c(t,x)$ and the second approximation (which might be exact) is the determination of the \istate mean of $\Phi(y,\bar{\vect u})$.
In a similar fashion we arrive at the approximation $\bar \psi_v(t,\bar{\vect u})$ for the remodelling term $\psi_v(t,\vect u)$ in the ECM density equation~\eqref{equ:generalModel-ecm} and 
at the approximation $\bar{\boldsymbol \psi}_{\vect m}(\bar{\vect{u}},\bar{\vect{r}})$ for the production term $\boldsymbol \psi_{\vect m}(\vect u, \vect r)$ in the free molecular species volume concentration equation~\eqref{equ:generalModel-motile}.

With all the above preparatory definitions and approximations at hand, we now derive the non-structured model and start by integrating the structured cell density, i.e. Eq.~\eqref{equ:generalModel-cells}, over the \istate space.
Under the assumption that we can exchange integration and differentiation on the left-hand side, i.e. that we can apply Leibniz's rule for differentiation under the integral sign \citep{Halmos1978}, we obtain
\begin{align*}
         \frac{\dd C}{\dd t} &=\!\! \int\limits_\P \!\Biggl(\!\nabla_x \!\cdot\! \left[D_c\nabla_x c 
         - c (1-\rho(C,v))\!\left(\sum_{k=1}^q \!\chi_k \nabla_x m_k \!+ \chi_v \nabla_x v\!\right)\!\right] \Biggr)\dif y \\[1mm]
     &\qquad  - \int\limits_\P\left( \nabla_y\cdot\left[\left(\vect b(y,\vect m) - \vect d(y)\right) c\right] \right)\dif y \\[2mm]
      &\qquad + \int\limits_\P\left( 2^{p+1} \Phi(2y,\vect u(t,x,2y)) c(t,x,2y)  - \Phi(y,\vect u(t,x,y)) c(t,x,y)\right)\dif y\,.
\end{align*}

Since, according to~\eqref{equ:zeroFlux-iState}, we have that the flux is zero in outer normal direction on the boundary of the \istate space, the second integral on the right-hand side vanishes using the divergence theorem.

Furthermore, cf. Equation~\eqref{equ:sourceTermIntegral} on page~\pageref{equ:sourceTermIntegral}, using the change of variables $\tilde y(y) = 2y$ in the first half of the integral over the proliferation term and upon immediately dropping the tilde-sign and invoking Convention~\ref{conv:istate}, we arrive for this integral at
\begin{align*}
  \int\limits_\P &\left( 2^{p+1} \Phi(2y,\vect u(t,x,2y)) c(t,x,2y) - \Phi(y,\vect u(t,x,y)) c(t,x,y)\right)\dif y
  \\
  &\qquad =\int\limits_\P\Phi(y,\vect u(t,x,y)) c(t,x,y)\dif y\,,
  \\
  \intertext{and finally, replacing the structured proliferation rate $\Phi(y,\vect u)$ by its \istate-independent approximation $\bar \Phi(\bar{\vect u})$, we obtain}
  &\qquad \approx \bar\Phi(\bar{\vect u})C\,.
\end{align*}

\begin{subequations}\label{equ:generalNonStructuredModel}

Replacing the remaining \istate-dependent parameter functions $D_c(y)$, $\chi_k(y)$ for $k=1,\dots,1$, and $\chi_v(y)$ in the equation for $C$ by their respective \istate-independent approximations $\bar D_c$, $\bar \chi_k$, and $\bar \chi_v$, and applying again Leibniz's rule for differentiation under the integral sign, we obtain the following non-structured equation for the total cell density
\begin{align}
         \frac{\dd C}{\dd t} &= \nabla_x \!\cdot\! \left[\bar D_c\nabla_x C 
         - C (1-\rho(C,v))\left(\sum_{k=1}^q \bar\chi_k \nabla_x m_k + \bar\chi_v \nabla_x v\right)\right]\! + \bar\Phi(\bar{\vect u}) C\,.
\end{align}

We now turn to derive the non-structured counterpart of Eq.~\eqref{equ:generalModel-ecm}, the equation for the ECM density.
Making use of the non-structured approximations $\bar{\vect{r}}$ and $\bar\psi_v(t,\bar{\vect{u}})$ we can simply write it down as
\begin{align}
  \frac{\dd v}{\dd t} &= -\boldsymbol\delta_v^\T \bar{\vect{r}} v + \bar\psi_v(t,\bar{\vect{u}})\,.
\end{align}

Finally, we derive the non-structured equation for the bound molecular species volume concentrations and take Eq.~\eqref{equ:generalModel-motile} as starting point. 
For the production term we use the earlier discussed approximation $\bar{\boldsymbol \psi}_{\vect m}(\bar{\vect{u}},\bar{\vect{r}})$ as replacement.
The term for the concentration changes due to surface binding and unbinding is approximated as follows
\begin{align*}
    -\int\limits_\P \left(\hat{\vect b}(y,\vect m) - \hat{\vect d}(y)\right) \varepsilon c(t,x,y) \dif y
  &\approx-\int\limits_\P \left(\hat{\vect b}(y,\vect m) - \hat{\vect d}(y)\right) \varepsilon \bar{c}(t,x) \dif y
  \\
  &=-\varepsilon C(t,x) \frac{1}{|\P|} \int\limits_\P \left(\hat{\vect b}(y,\vect m) - \hat{\vect d}(y)\right) \dif y
  \\
  &=-\varepsilon C(t,x)\left(\overline{\hat{\vect{b}}}(\vect m) - \overline{\hat{\vect{d}}}\right)\,.
\end{align*}
Thus, taking that all together, we arrive at
\begin{align}
     \frac{\dd \vect m}{\dd t} &= \nabla_x \cdot \left[\mat D_{\vect m} \nabla_x \vect m\right] - 
            \left(\overline{\hat{\vect{b}}}(\vect m) - \overline{\hat{\vect{d}}}\right) \varepsilon C 
            + \bar{\boldsymbol \psi}_{\vect m}(\bar{\vect{u}}, \bar{\vect{r}}) - \diag(\boldsymbol \delta_{\vect m}) \vect m\,.
\end{align}

Finally, the initial and boundary conditions of the structured model give rise to the following initial conditions
\begin{align}
 C(0,x) &= \int_\P c_0(x,y)\dif y\,, & v(0,x) &= v_0(x)\,, & \vect{m}(0,x) &= \vect{m}_0(x) & \text{ for } x\in\overline\D\,,
\end{align}
and zero-flux boundary conditions
\begin{equation}
\begin{aligned}
 \left[\bar D_c\nabla_x C 
       - C (1-\rho(C,v))\left(\sum_{k=1}^q \bar\chi_k \nabla_x m_k + \bar\chi_v \nabla_x v\right)
 \right]
 \cdot \normal(x) = 0\,,&
  \\
  \left[\mat D_{\vect m} \nabla_x \vect m\right] \cdot \normal(x) = \vect 0\,,&
  \\
  \text{ for } t\in\I\,,\,\, x\in\partial\D\,,
\end{aligned}
\end{equation}
in the non-structured case.

\end{subequations}



\section{A Generic Structured Model of Cancer Invasion}\label{sec:numericalExample}

Now that we have derived the general structured-population model \eqref{equ:generalModel}, we want to explore the influence of the structure on the spatio-temporal dynamics of the model components starting with a very simple model.

In cancer modelling, most approaches exploring molecular-cell population dynamic interactions are either based on spatio-temporal PDEs of reaction-diffusion-taxis type, \citep{Gatenby1996,Anderson2000,Byrne2004,Chaplain2005,Domschke2014}, or continuum-discrete hybrid systems \citep{Anderson1998,Anderson2000,Anderson2005}, or more recently multiscale continuum models \citep{Ramis-Conde2008,Marciniak-Czochra2008,Macklin2009,Deisboeck2011,Trucu2013}. Of particular interest in cancer invasion is the interaction between the tumour cell population and various 
proteolytic enzymes, such as MMPs \citep{Parsons1997} or the uPA system \citep{Andreasen1997,Andreasen2000,Pepper2001} that enable the degradation of extracellular matrix components, thus promoting further local tumour progression. While the modelling of this interaction has already received a special attention \citep{Chaplain2005,Chaplain2006,Andasari2011,Deakin2013}, the structural characteristics of, e.g., the binding process of uPA to its surface receptor uPAR and the activation of matrix-degrading enzymes (MDEs) coupled with their simultaneous effects on cell motility and proliferation so far have been unexplored.

In this example, besides a structured population of cancer cells with cell density~$c$ and the ECM with density~$v$, we have two molecular species with volume concentrations~$m_1$ and~$m_2$. 
Cancer cells rearrange spatially through random motility, chemotaxis with respect to $m_1$, and haptotaxis with respect to $v$.
The first molecular species, $m_1$, is produced by the cancer cells and can bind to the surface of the cells; the latter process gives rise to the \istate of a cell. 
The second molecular species, $m_2$, is solely produced (or activated from an abundantly present inactive form) through the action of bound molecules of the first type and subsequently degrades the ECM. 

To set up the model, note that only the first molecular species, $m_1$, binds to the cell surface and thus we consider the one-dimensional \istate space $\P=(0,Y)$, where $Y>0$ denotes the maximum surface density of the first species. We detail our assumptions regarding the coefficients and parameters of the general model~\eqref{equ:generalModel} in the following paragraphs.

For the structured cell density equation, i.e. Eq.~\eqref{equ:generalModel-cells}, we assume the diffusion coefficient $D_c$ and the chemotaxis coefficient $\chi_1$ to be constant, $\chi_2=0$, and the haptotaxis coefficient~$\chi_v(y)$ to be \istate-dependent.
The proliferation rate~$\Phi$ of the cancer cells is considered to be restricted by spatial constraints and to be \istate-independent and takes the form
\begin{align}
	\Phi(y,\vect{u})\equiv \Phi(C,v) = \mu_c (1-\rho(C,v))\,. \label{equ:proliferationRate}
\end{align}
Finally, for the binding rate $\vect b(y,\vect m)$ we assume that it is proportional to the available free molecular volume concentration $m_1$ and also proportional to the free capacity of the cell's surface, i.e. $Y-y$,
and for the unbinding rate $\vect d(y)$ we assume that it is proportional to the bound molecular surface density $y$. 
This gives rise to the following scalar rates
\begin{align}\label{equ:exampleBindingUnbindingRates}
  \vect b(y,\vect m) = (Y-y)\beta m_1
  \quad\text{and}\quad
  \vect d(y) = y\delta_y\,.
\end{align}

Turning to the ECM density equation~\eqref{equ:generalModel-ecm}, recall that the combined vector of bound and free molecular species volume concentrations is given by $\vect r = (n_1, m_1, m_2)^T \in \R^{3}$ and that ECM is degraded upon contact with $m_2$. We assume a constant ECM degradation rate $\delta_v$ and the vector of degradation rates has the form
\begin{equation}
\boldsymbol\delta_v^\T=(0,0,\delta_v)\,.
\end{equation}
The remodelling term is defined independent of $y$ and following Eq.~\eqref{equ:ECMRemodelling} as
\begin{equation}
  \psi_v(t,\vect u) \equiv \psi_v(C,v) = \mu_v(1-\rho(C,v))^+\,.
\end{equation}

Finally, in Eq.~\eqref{equ:generalModel-motile} for $\vect{m}$, we consider the following linear production and degradation terms with constant coefficients
\begin{align}
\boldsymbol\psi_{\vect m} =\begin{pmatrix} \alpha_{m_1}C\\[0.2cm] \alpha_{m_2}n_1 \end{pmatrix} \qquad\text{and}\qquad
\boldsymbol \delta_{\vect m} =\begin{pmatrix}\delta_{m_1} \\[0.2cm] \delta_{m_2} \end{pmatrix}\,.
\end{align}
Observe that the production term of $m_2$ depends implicitly on the \istate via the bound molecular species volume concentration~$n_1$.

These considerations lead to the following structured system
\begin{subequations}
\begin{align}
\begin{split}
\frac{\dd c}{\dd t} &= \nabla_x\cdot\left[ D_c \nabla_x c - c(1-\rho(C,v))\left(\chi_1 \nabla_x m_1 + \chi_v(y)\nabla_x v\right)\right]\\[1mm]
& ~- \nabla_y \!\cdot\! \left[ \bigl(\vect b(y,\vect m)\!-\vect d(y)\bigr) c \right]  
\!+ \Phi(C,v)\left[ 4c(t,x,2y) - c(t,x,y) \right],
\end{split} \label{equ:structuredExample-c}\\[2mm]
 \frac{\dd v}{\dd t} &= - \delta_v m_2 v + \mu_v (1-\rho(C,v))^+\,, \label{equ:structuredExample-v}\\[2mm]
 \frac{\dd m_1}{\dd t} &= \nabla_x\cdot \left[ D_{m_1} \nabla_x m_1 \right] \!-\! \Bigl(\!\bigl(Y\!\varepsilon C \!-\! n_1 \bigr)\beta m_1 \!-\! \delta_y n_1\Bigr) 
 \!+ \alpha_{m_1} C - \delta_{m_1} m_1\,, \label{equ:structuredExample-m1}
 \\[2mm]
 \frac{\dd m_2}{\dd t} &= \nabla_x\cdot \left[ D_{m_2} \nabla_x m_2 \right] + \alpha_{m_2} n_1 - \delta_{m_2} m_2\,. 
 \label{equ:structuredExample-m2}
\end{align}\label{equ:structuredExample}
\end{subequations}

As in the general model, system~\eqref{equ:structuredExample} is supposed to hold for $t\in\I$, $x\in\D$ and $y\in\P$ and is completed by initial conditions and appropriate boundary conditions. In space we have zero-flux boundary conditions, while in the \istate space $\P$, due to the hyperbolic nature of the structured cell equation in the \istate variable $y$, we have to determine the inflow boundary $\partial\P_{in}(x,t)$ as defined in~\eqref{equ:inflowBoundary}.
In this example, the \istate space is given by the interval~$(0,Y)$, hence the boundary is $\partial\P=\{0,Y\}$. With the corresponding outer unit normal vectors $\normal(0)=-1$ and $\normal(Y)=1$, we obtain
\begin{align*}
 (\vect{b}(0,\vect{m})-\vect d(0))\cdot \normal(0) &= -Y\beta m_1 \leq 0\,,\\
 (\vect{b}(Y,\vect{m})-\vect d(Y))\cdot \normal(Y) &= -Y\delta_y \leq 0\,.
\end{align*}
Provided that there exist molecules of the first type (meaning that the volume concentration $m_1(t,x)$ is positive) and 
that binding and unbinding may take place (meaning that the binding and unbinding rate parameters $\beta$ and $\delta_y$ are positive),
both terms above are negative and the inflow boundary is $\partial\P_{in}(t,x)=\{0,Y\}$. 
If $m_1(t,x)$ or one of the parameters is zero, for example if we consider that no unbinding occurs, then the corresponding term above is zero and we do not have a classical inflow boundary at the corresponding location. 
Still, in such a situation also the corresponding flux across the boundary is zero and we make computational use of such a zero-flux boundary condition in our numerical scheme.

As derived in Section~\ref{sec:generalDerivationNonStructuredModel} and using the mean value $\bar\chi_v$ of the \istate-dependent coefficient $\chi_v(y)$ and the mean values
\begin{align*}
 \bar y &= \frac{Y}{2}\,,  &
 \bar{\vect{b}}(\vect m) &=  (Y\!-\bar y)\beta m_1\,,\quad\text{ and} &
 \bar{\vect{d}} &= \bar y \delta_y\,,
\end{align*}
we obtain the following corresponding non-structured model
\begin{subequations}
\begin{align}
\frac{\dd C}{\dd t} &= \nabla_x\!\cdot\!\left[ D_c \nabla_x C 
	\!-\! C(1\!-\!\rho(C,v))\left(\chi_1 \nabla_x m_1 + \bar\chi_v\nabla_x v \right)\right] + \Phi(C,v) C\,,\label{equ:non-structuredExample-c}\\
\frac{\dd v}{\dd t} &= - \delta_v m_2 v + \mu_v (1-\rho(C,v))^+\,, \label{equ:non-structuredExample-v}\\	
\frac{\dd m_1}{\dd t} &= \nabla_x\cdot \left[ D_{m_1} \nabla m_1 \right]  - \left((Y-\bar y)\beta m_1 - \bar y \delta_y \right)\varepsilon C + \alpha_{m_1}C - \delta_{m_1} m_1 \,,\label{equ:non-structuredExample-m1}\\
\frac{\dd m_2}{\dd t} &= \nabla_x\cdot \left[ D_{m_2} \nabla m_2 \right] + \alpha_{m_2}\varepsilon \bar y C - \delta_{m_2} m_2 \,.\label{equ:non-structuredExample-m2}
\end{align}\label{equ:non-structuredExample}
\end{subequations}

\subsection{Numerical simulations of the structured and corresponding non-structured model}
The simulations in this section highlight the difference that the structural binding information makes in characterising the dynamics in the structured case \eqref{equ:structuredExample} versus the corresponding non-structured system \eqref{equ:non-structuredExample}. 

In these numerical simulations, we use the following basic parameter set~$\paramSet$  similar to that used in \citet{Domschke2014}:
\begin{align}
\begin{aligned}
c:&& D_c &= 10^{-4}  & \chi_v &= 0.05 & \chi_1 &= 0.001  & \mu_c &= 0.1 \\[-0.5mm]
\istate:&& Y&=1 & \varepsilon &= 0.1 & \beta &= 0.5 & \delta_y &= 0\\[-0.5mm]
v:&& && \delta_v &= 10 & \mu_v &= 0.05  \\[-0.5mm]
m_1:&& D_{m_1}  &= 10^{-3} & \alpha_{m_1} &=0.1 & \delta_{m_1} &= 0.1 \\[-0.5mm]
m_2:&& D_{m_2} &= 10^{-3} & \alpha_{m_2} &=0.5 & \delta_{m_2} &= 0.1
\end{aligned}\tag{$\paramSet$}\label{equ:paramSet}
\end{align}

%
%
For the structured case, we consider also the haptotaxis coefficient~$\chi_v$ as a function linearly decaying from a maximal value $\chi_v^+$ for $y=0$ to a minimal value $\chi_v^-$ for $y=Y$ in order to model the effect that free receptors may bind to the ECM and thus accelerate haptotactic movement. Apposite to the basic parameter set~\ref{equ:paramSet}, we thus choose
\begin{align}
\chi_v(y)=  (\chi_v^- - \chi_v^+)\frac{y}{Y} + \chi_v^+\,,
\label{equ:linearHapto}
\end{align}
with $\chi_v^- = 0.001$ and $\chi_v^+ = 0.099$. This leads to the mean haptotactic coefficient $\overline{\chi}_v = 0.05$ for the non-structured model, which is identical with $\chi_v$ from the basic parameter set~$\paramSet$.

To complete the system, we choose the following initial conditions. The cancer cells are assumed to form a cancerous mass located at the origin (in $x$) with initially some molecules already bound to the cells' surfaces
\begin{align*}
 c_0(x,y) &= \exp(-(x^2+4(y-0.25)^2)/0.01)\,.
\intertext{This leads to the total initial cancer cell density}
 C_0(x) &= \int_\P c_0(x,y) \dif y \approx 0.0886 \exp(-x^2/0.01)\,,
\intertext{which we coose as initial cancer cell density for the corresponding non-structured model. The initial ECM density is chosen according to the spatial constraints}
 v_0(x) &= 1 - C_0(x)\,,
\intertext{and finally, we assume that the cancer cells already released some of the molecular species $m_1$ into the environment and set the initial molecular species volume concentrations to}
 \vect m_0(x) &= (0.5 C_0(x), 0)^\T\,.
\end{align*}

The results shown in Figures~\ref{fig:structuredExample} and \ref{fig:structuredExample-withUnbinding}, are obtained from simulations of the structured model~\eqref{equ:structuredExample} using the parameter set~\ref{equ:paramSet} with modifications as detailed in each figure caption. They present the structured cancer cell density~$c(t,x,y)$ in the \emph{spatio-structural} space~$\D\times\P$ in the top row and, in the bottom row, the total cancer cell density~$C(t,x)$, ECM density~$v(t,x)$, and bound and free molecular species volume concentrations~$\vect r(t,x) = (n_1(t,x), m_1(t,x),m_2(t,x))^\T$ in the spatial domain~$\D$ at initial time $t = 0$ and at times $t = 50,100,150$, and $200$ (from left to right). 

\begin{figure}
\begin{subfigure}[b]{\textwidth}
\includegraphics[width=\textwidth]{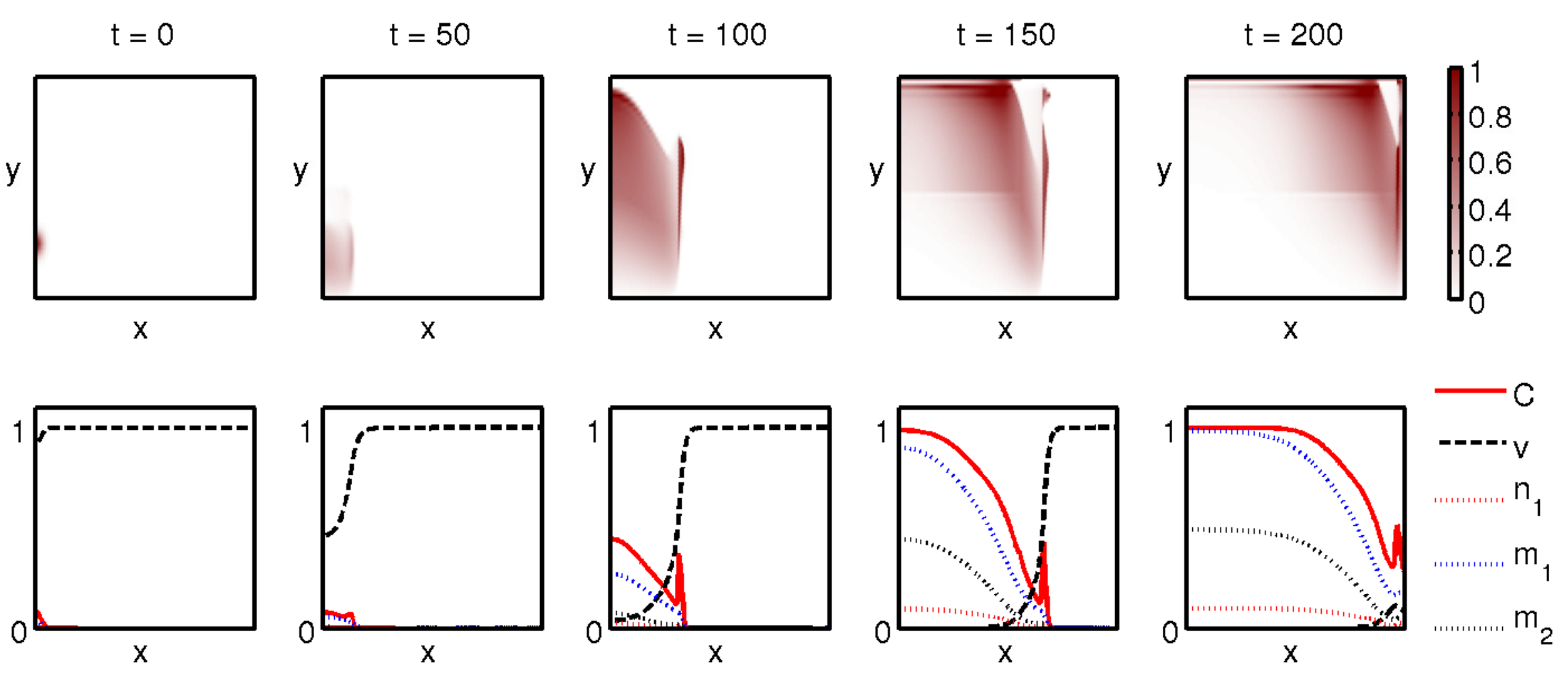}
\caption{
constant $\chi_v$}
\label{fig:structuredExample-constantHapto}
\end{subfigure}\\
\begin{subfigure}[b]{\textwidth}
\includegraphics[width=\textwidth]{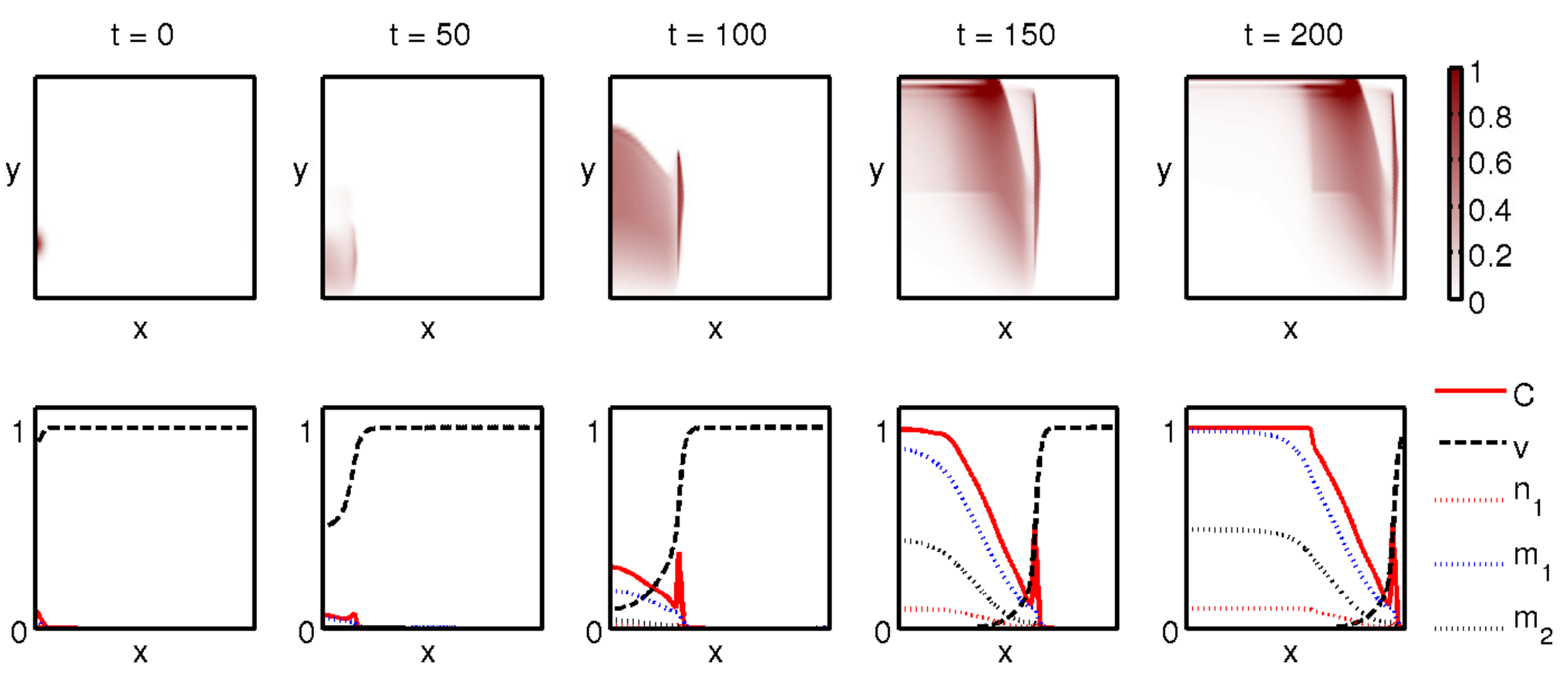}
\caption{
\istate-dependent $\chi_v$, cf.~\eqref{equ:linearHapto}}
\label{fig:structuredExample-linearHapto}
\end{subfigure}
\caption{Plots showing the computational simulation results at increasing time points (left to right) of the structured system~\eqref{equ:structuredExample} using the basic parameter set~\ref{equ:paramSet} with the haptotaxis term $\chi_v$ as specified in (a) and (b).
  The top row in each of (a) and (b) shows the evolution of the structured cancer cell density in the spatio-structural space $\D\times\P$; the bottom row in each of (a) and (b) shows the evolution of all non-structured variables in the spatial domain~$\D$.}
\label{fig:structuredExample}
\end{figure}

The results shown in Figure~\ref{fig:nonStructuredExample} are obtained from a simulation of the corresponding non-structured model~\eqref{equ:non-structuredExample} using parameters according to~\ref{equ:paramSet}. They present the total cancer cell density~$C(t,x)$, ECM density~$v(t,x)$, and the free molecular species volume concentrations~$\vect m(t,x) = (m_1(t,x),m_2(t,x))^\T$ in the spatial domain~$\D$ at initial time $t = 0$ and at times $t = 50,100,150$, and $200$ (from left to right).

\begin{figure}
\includegraphics[width=\textwidth]{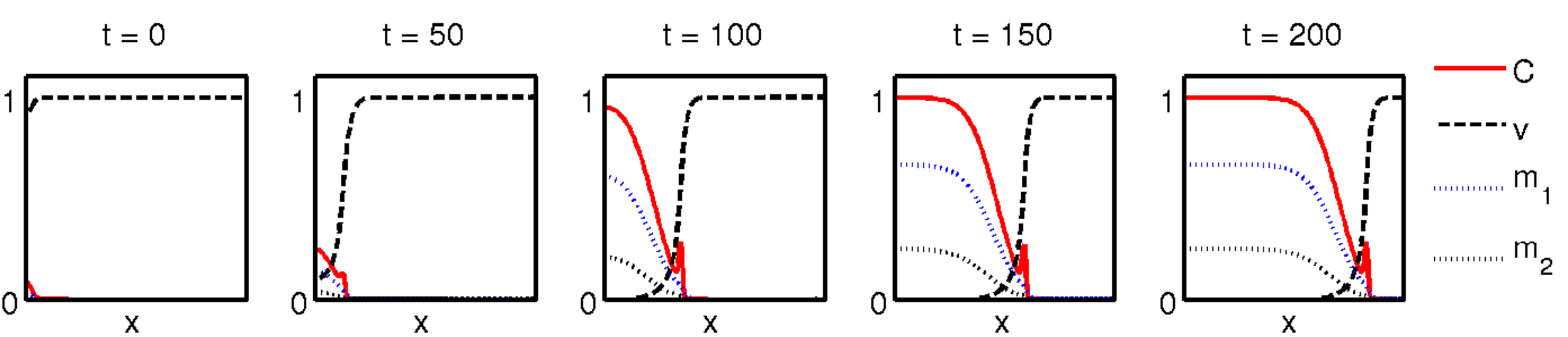}
  \caption{Plots showing the computational simulation results  at increasing time points (left to right) of the corresponding non-structured system~\eqref{equ:non-structuredExample} using the basic parameter set~\ref{equ:paramSet}.}\label{fig:nonStructuredExample}
\end{figure}
%
\begin{figure}
\begin{subfigure}[b]{\textwidth}
\includegraphics[width=\textwidth]{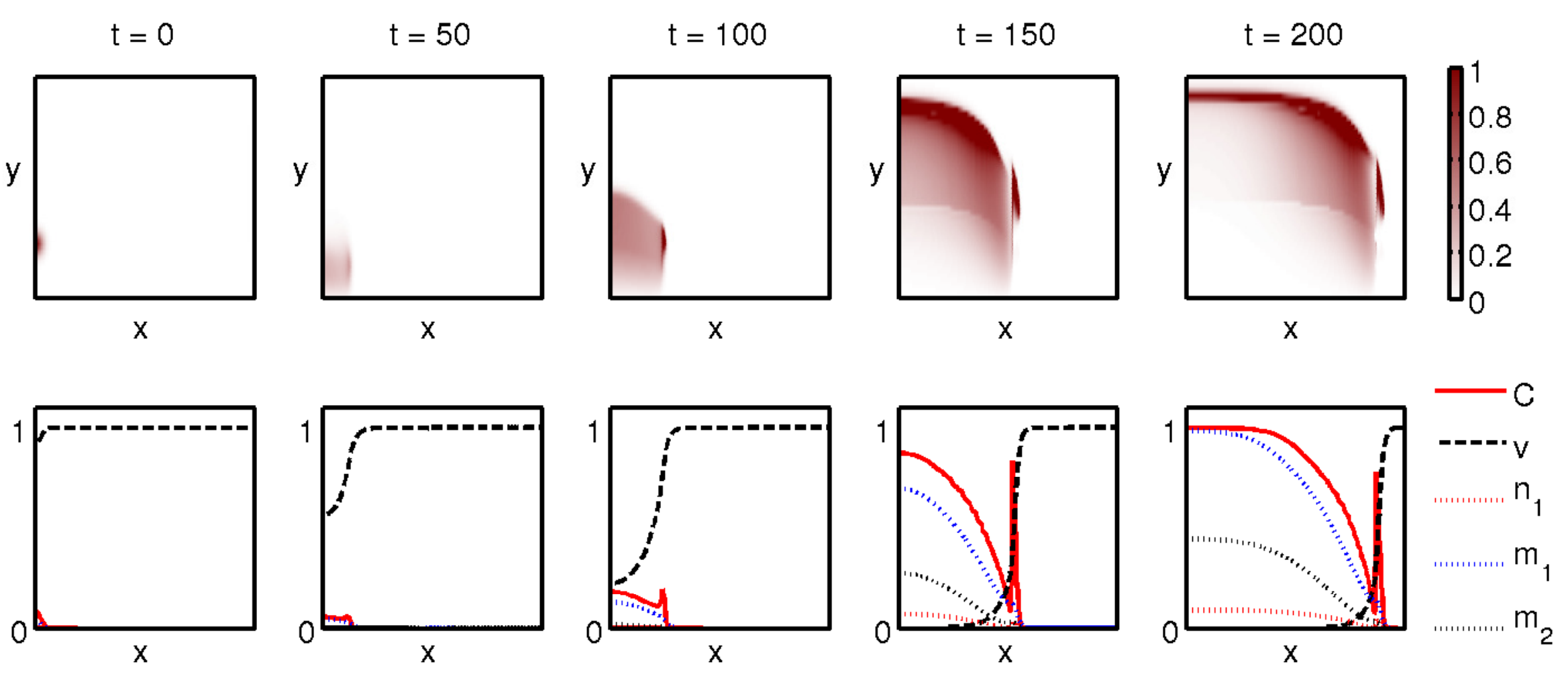}
\caption{
constant $\chi_v$}
\label{fig:structuredExample-withUnbinding-constantHapto}
\end{subfigure}\\
\begin{subfigure}[b]{\textwidth}
\includegraphics[width=\textwidth]{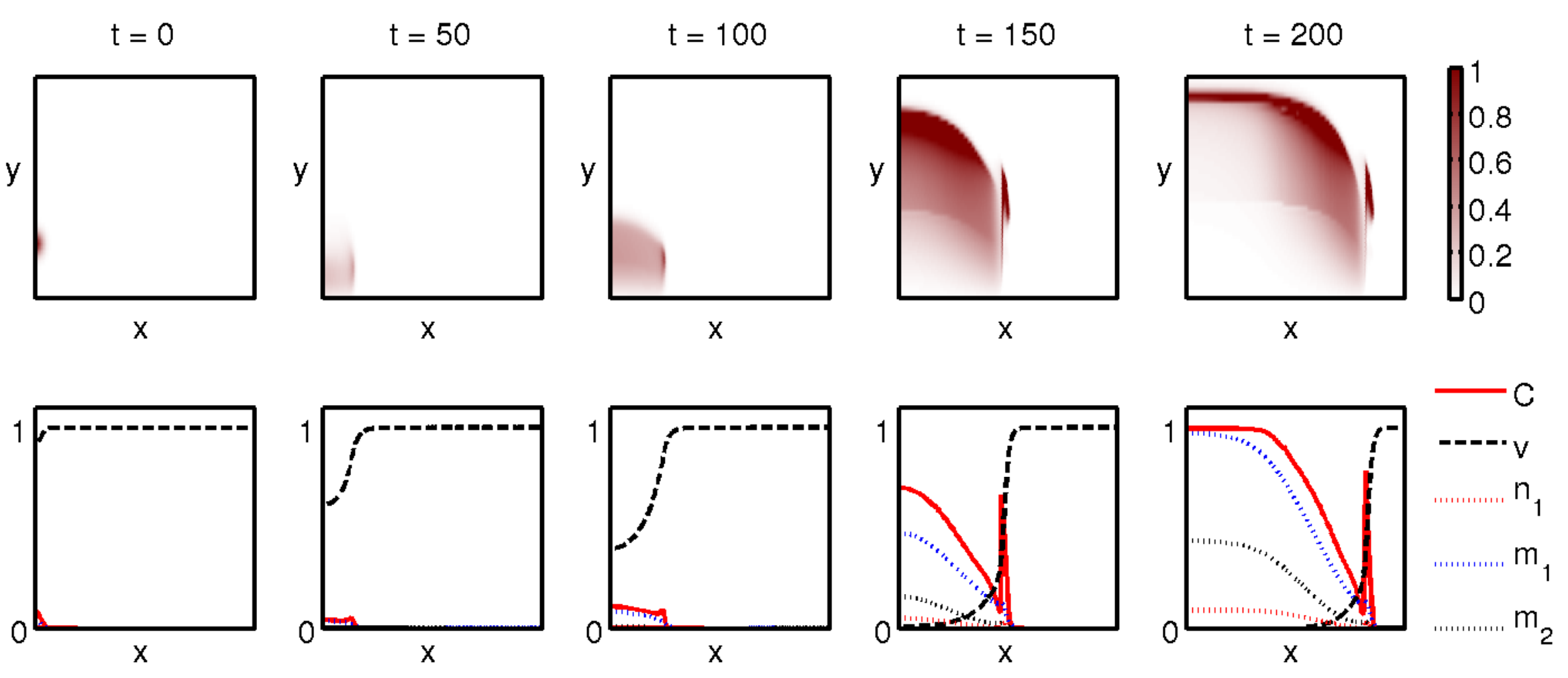}
\caption{
\istate-dependent $\chi_v$}
\label{fig:structuredExample-withUnbinding-linearHapto}
\end{subfigure}
\caption{Plots showing the computational simulation results at increasing time points (left to right) of the structured system~\eqref{equ:structuredExample} using the basic parameter set~\ref{equ:paramSet} with unbinding of molecules at the rate $\delta_y=0.05$ and using the haptotaxis term $\chi_v$ as specified in (a) and (b).
The top row in each of (a) and (b) shows the evolution of the structured cancer cell density in the spatio-structural space $\D\times\P$; the bottom row in each of (a) and (b) shows the evolution of all non-structured variables in the spatial domain~$\D$.}
\label{fig:structuredExample-withUnbinding}
\end{figure}

In Figure~\ref{fig:structuredExample-constantHapto}, we see that initially only a small amount of activator is bound to the cell surface and hence, up to $t=100$, the ECM is degraded much more slowly than in the non-structured case in Figure~\ref{fig:nonStructuredExample}. Over time, the cancer cells proliferate, produce, and bind more of the $m_1$-molecules, which in turn activate the matrix-degrading enzyme $m_2$. Hence, at later times, the level of MDEs $m_2$ is about twice as much in the structured case as in the non-structured case shown in Figure~\ref{fig:nonStructuredExample}. 

If we compare the constant haptotaxis result in Figure~\ref{fig:structuredExample-constantHapto} with the \istate-dependent haptotaxis case from Figure~\ref{fig:structuredExample-linearHapto}, we observe that the invading front of the total cancer cell density~$C(t,x)$ has a steeper and less regular shape. At the same time, a comparison between the spatio-structural dynamics shown in the top rows of Figures~\ref{fig:structuredExample-constantHapto} and~\ref{fig:structuredExample-linearHapto} reveals that until $t=150$ the amount of bound $m_1$-molecules, i.e. $n_1$, is less in the \istate-dependent case, leading to a slower start in ECM degradation while the cancer cells $c(t,x,y)$ remain less spread in the structural variable.

In Figure~\ref{fig:structuredExample-withUnbinding} we present the simulation results of the model~\eqref{equ:structuredExample}, where we consider the unbinding of molecules with rate $\delta_y=0.05$. We observe that due to the unbinding of the $m_1$-molecules, the degradation of the ECM is less compared to the case without unbinding, and the cell-surface concentration remains below the maximum of 1, i.e., $y<1$. 
A stronger aggregating tendency in the \istate component of the spatio-structural distribution of the invading cancer cells could be observed in Figure~\ref{fig:structuredExample-withUnbinding}, with the leading peak being higher compared to that in Figure~\ref{fig:structuredExample}. 




\section{A Structured-Population Model of Cancer Invasion Based on the uPA System}\label{sec:uPASystem}

After exploring the structured-population approach for the generic model of cancer invasion, we now apply the general framework to a more involved model of cancer invasion. We will also present the corresponding non-structured model and compare it to an already existing model for the same process.

Cancer cell invasion is a complex process occurring across many scales, both spatial and temporal, ranging from  biochemical intracellular interactions to cellular and tissue scale processes. A major component of the invasive process is the degradation of the extracellular matrix (ECM) by proteolytic enzymes. One important enzymatic system in cancer invasion that has been investigated in the literature is the so-called uPA system (urokinase plasminogen activation system), see for example \citet{Chaplain2005,Chaplain2006,Andasari2011}. 
It consists of a cancer cell population, the ECM, urokinase plasminogen activator (uPA) alongside plasminogen activator inhibitor type-1 (\PAI1) proteins, and the matrix degrading enzyme plasmin. These are accompanied by   urokinase plasminogen activator receptor (uPAR) molecules that are located on the cancer cell membrane.  

The free uPA molecules bind to uPAR and this complex subsequently activates the matrix degrading enzyme plasmin from its pro-enzyme plasminogen. In healthy cells, the activation of plasminogen is tightly regulated by the availability of uPA, for example by producing inhibitors of uPA like \PAI1. In contrast, cancer cells produce uPA to activate plasminogen, and hence excessively degrade the ECM, this way making room for further invasion. A schematic diagram can be found in Figure~\ref{fig:cancerCellSurface}. Details about the uPA system from a biological point of view can be found for example in \citet{Andreasen1997,Duffy2004,Ulisse2009}.

\begin{figure}[h]
\includegraphics[width=\textwidth]{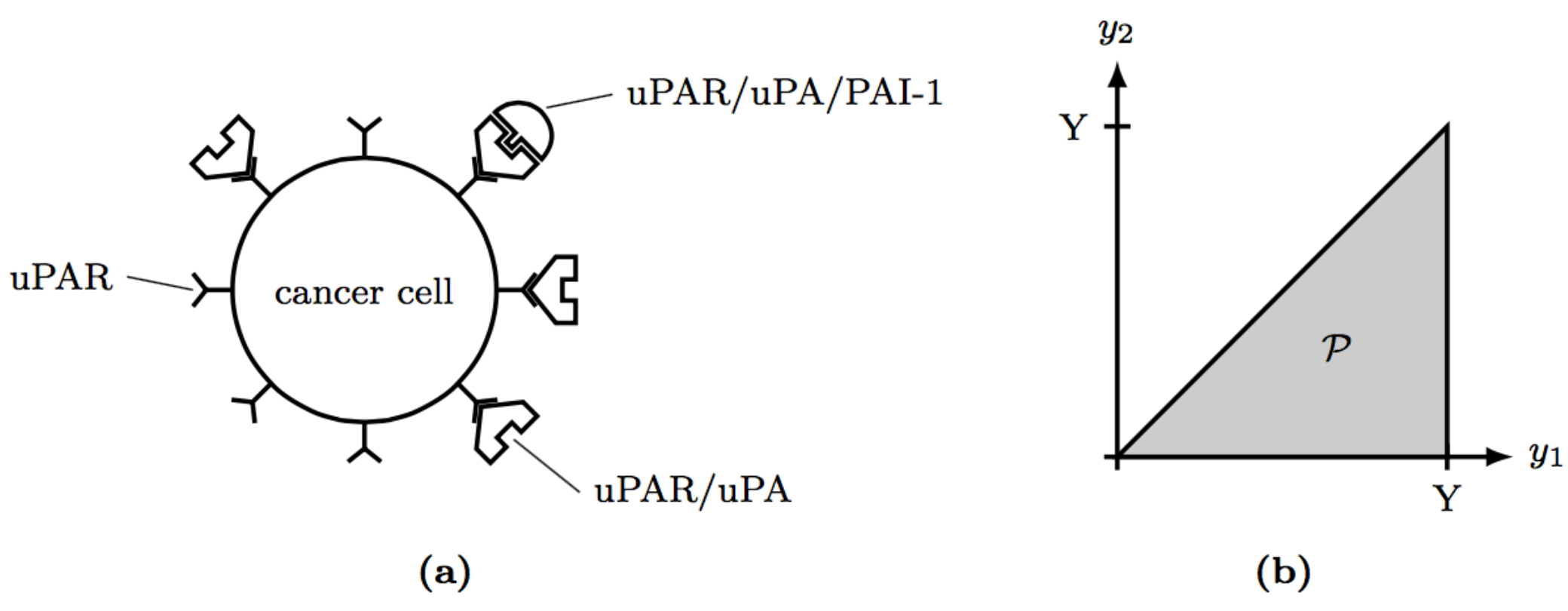}
\caption{Schematic diagrams of (a) a cancer cell with surface-bound receptors uPAR, bound uPA and inhibitor \PAI1; (b) the corresponding \istate space~$\P$.}
\label{fig:iStateSpace}
\end{figure}

Our structured general modeling framework \eqref{equ:generalModel} specialises for the uPA system using the following dependent variables:
\begin{itemize}
   \item the structured cancer cell density $c(t,x,y)$;
   \item the extracellular matrix density $v(t,x)$;
   \item the free molecular species volume concentrations, written as $$\vect m = (m_1, m_2, m_3)^\T\,,$$ where $m_1(t,x)$ represents the uPA, $m_2(t,x)$ stands for the \PAI1, and $m_3(t,x)$ is the plasmin volume concentration.
\end{itemize}

Here we assume that cancer cells carry a fixed amount 
of uPAR bound to their surface, hence the binding of uPA to the surface is limited by a maximal surface concentration $Y$.
Free \PAI1 enzymes only bind to the bound uPA. 
The two-dimensional \istate $y=(y_1,y_2)^\T \in \P$ therefore consists of the surface concentration $y_1$ of bound uPA on the cell surface, and the surface concentration $y_2\leq y_1$ of bound inhibitor \PAI1 molecules attached to the bound uPA enzymes.
Hence, the \istate space $\P$ is given by the open triangle $\P = \{y \in (0,Y)^2: y_2 < y_1\}$, as illustrated in the schematic diagram shown in Figure~\ref{fig:iStateSpace}.

For the binding rate of uPA, $b_1$, we assume that it is proportional to the free (unoccupied) receptors $Y-y_1$ and also to the availability of the free uPA, $m_1$. The binding rate of the inhibitor \PAI1, $b_2$, is assumed to be proportional to the uninhibited bound uPA $y_1-y_2$ and the availability of free \PAI1. Similarly, we assume the unbinding rate $d_2$ of \PAI1 to be proportional to the bound \PAI1, i.e., $y_2$. 
In this model, we do not consider that a uPA/\PAI1 complex unbinds as a whole but that first the \PAI1 must unbind.
Hence, the unbinding rate $d_1$ of uPA is proportional to the bound but uninhibited uPA, i.e. $y_1-y_2$. This gives rise to the following rates
\begin{align}\label{equ:uPA-bindingUnbinding}
 \vect{b}(y,\vect{m}) &= \begin{pmatrix}
		      (Y-y_1)\beta_1 m_1\\
		      (y_1-y_2) \beta_2 m_2
		  \end{pmatrix}\,, \qquad\text{and} &
 \vect{d}(y) &= \begin{pmatrix}
		      (y_1-y_2)\delta_{y_1}\\
		       y_2 \delta_{y_2}
		  \end{pmatrix}\,.
\end{align}

While the uPA is produced by the cancer cells, and the inhibitor \PAI1 is produced via plasmin activation, plasmin itself is activated from plasminogen by uninhibited bound uPA, which is described by $n_1-n_2$.
Hence, with~$\vect u$ and~$\vect r$ defined in~\eqref{equ:def_u_r}, we obtain that the vector of linear production terms is given by
\begin{align*}
	\boldsymbol \psi_{\vect m}(\vect u,\vect r) &= 
			\begin{pmatrix}
				\alpha_{m_1} C\\[1mm]
				\alpha_{m_2} m_3\\[1mm]
				\alpha_{m_3} (n_1-n_2)
			\end{pmatrix}\,.
\end{align*}

Finally, using the \istate-independent logistic proliferation law $\Phi (C,v)$ defined in~\eqref{equ:proliferationRate}, we arrive at the following system
\begin{subequations}
\begin{align}
   \begin{split}
     \frac{\dd c}{\dd t} &= \nabla_x \!\cdot\! \Bigl( D_c \nabla_x c - c(1-\rho(C,v)) \left(\chi_1\! \nabla_x m_1 +\chi_2 \!\nabla_x m_2 + \chi_v \!\nabla_x v\right)\Bigr)\\
      & \quad -\nabla_y \!\cdot\! \Bigl((\vect b(y,\vect m)-\vect d(y))c\Bigr) 
    + \Phi(C,v) \left(8c(t,x,2y) - c(t,x,y)\right)\,,
   \end{split} \label{equ:uPASystem-c}\\[2mm]
   \frac{\dd v}{\dd t} &= -\delta_v m_3 v + \mu_v (1-\rho(C,v))^+\,, \label{equ:uPASystem-v} \\[2mm]
   \begin{split}
   \frac{\dd m_1}{\dd t} &= \nabla_x \!\cdot\! \left[D_{m_1}\! \nabla_x  m_1\right] 
            \!-\! \bigl((Y\varepsilon C \!-\! n_1)\beta_1 m_1 \!-\! (n_1\!-\!n_2)\delta_{y_1} \bigr)\\
            &\qquad+ \alpha_{m_1} C -\delta_{m_1}m_1\,, \label{equ:uPASystem-m1}
   \end{split}
            \\[2mm]
     \frac{\dd m_2}{\dd t} &= \nabla_x \!\cdot\! \left[D_{m_2} \!\nabla_x m_2\right] 
             \!-\! \bigl( (n_1 \!-\! n_2)\beta_2 m_2 \!-\! n_2\delta_{y_2} \bigr)
            \!+ \alpha_{m_2} m_3 -\delta_{m_2}m_2\,, \label{equ:uPASystem-m2}
     \\[2mm]
      \frac{\dd m_3}{\dd t} &= \nabla_x \!\cdot\! \left[D_{m_3} \nabla_x  m_3\right] 
             + \alpha_{m_3}(n_1-n_2) - \delta_{m_3} m_3 \,. \label{equ:uPASystem-m3}
\end{align} \label{equ:uPASystem}
\end{subequations}

As before, system~\eqref{equ:uPASystem} is supposed to hold for $t\in\I$, $x\in\D$ and $y\in\P$ and is completed by initial conditions and appropriate boundary conditions. In space we have again zero-flux boundary conditions, while we have to determine the inflow boundary $\partial\P_{in}(t,x)$ as defined in~\eqref{equ:inflowBoundary} for the \istate space.
Here, the \istate space is defined as a triangle, see Fig.~\ref{fig:iStateSpace}, and we can divide the boundary into three parts, $\partial\P = \overline{\partial\P_1\cup\partial\P_2\cup\partial\P_3}$ with $\partial\P_1 :=\{(y_1,0):0<y_1<Y\}$, $\partial\P_2 :=\{(Y,y_2):0<y_2<Y\}$, and $\partial\P_3 :=\{(y_1,y_1):0<y_1<Y\}$, 
and the corresponding outer unit normal vectors
\[
 \normal(y) = \begin{cases}
               (0,-1)^\T\,, & \text{for } y\in\partial\P_1\,,\\
               (1,0)^\T\,, & \text{for } y\in\partial\P_2\,,\\
               \tfrac{1}{\sqrt{2}}(-1,1)^\T\,, & \text{for } y\in\partial\P_3\,.
              \end{cases}
\]
Then we obtain 
\[
 (\vect{b}(y,\vect{m}-\vect d(y))\cdot \normal(y) = \begin{cases}
               -y_1 \beta_2 m_2 \leq0\,, & \text{for } y\in\partial\P_1\,,\\
               -(Y-y_2)\delta_{y_1} \leq0\,, & \text{for } y\in\partial\P_2\,,\\
               -\tfrac{1}{\sqrt{2}}\left((Y-y_1)\beta_1m_1+y_1\delta_{y_2}\right) \leq0\,, & \text{for } y\in\partial\P_3\,.
              \end{cases}
\]
Provided that there exist molecules of the first and second type (meaning that the volume concentrations $m_1(t,x)$ and $m_2(t,x)$ are positive) and 
that binding and unbinding may take place (meaning that the binding and unbinding rate parameters $\beta_{1/2}$ and $\delta_{y_{1/2}}$ are positive),
all terms above are negative and the inflow boundary is $\partial\P_{in}(t,x)=\partial\P$. 
In case $m_1(t,x)$ or $m_2(t,x)$ or some of the parameters are zero, we might not have a classical inflow boundary at some parts of~$\partial\P$. 

As derived in Section~\ref{sec:generalDerivationNonStructuredModel} and using the mean values of the \istate-dependent coefficients, 
\begin{align*}
 \bar y &= \begin{pmatrix}\frac{2}{3}Y\\[1mm] \frac13 Y\end{pmatrix}\,,  &
 \bar{\vect{b}}(\vect m) &=  \begin{pmatrix}(Y-\bar y_1)\beta_1 m_1 \\[1mm] (\bar y_1-\bar y_2)\beta_2 m_2\end{pmatrix} \,, &
 \bar{\vect{d}} &= \begin{pmatrix}(\bar y_1 - \bar y_2) \delta_{y_1}\\\bar y_2 \delta_{y_2}\end{pmatrix}\,, &
\end{align*}
we obtain the following corresponding non-structured model for the dynamics of the uPA system

\begin{subequations}
\begin{align}
   \begin{split}
     \frac{\dd C}{\dd t} &= \nabla_x\!\cdot\!\! \Bigl( \bar D_c \nabla_x C - C(1\!-\!\rho(C,v))\! \left(\bar \chi_1 \nabla_x m_1 +\bar \chi_2 \nabla_x m_2 + \bar \chi_v \nabla_x v\right)\!\Bigr) \\
     &\qquad+ \Phi(C,v) C\,,
   \end{split} \label{equ:uPASystem-unstructured-c} \\[2mm]
   \frac{\dd v}{\dd t} &= -\delta_v m_3 v + \mu_v (1-\rho(C,v))^+\,,  \label{equ:uPASystem-unstructured-v} \\[2mm]
   \begin{split}
    \frac{\dd m_1}{\dd t} &= \nabla_x \!\cdot\! \left[D_{m_1} \!\nabla_x  m_1\right] - 
            \bigl((Y\!-\bar y_1)\beta_1 m_1 \!- (\bar y_1 - \bar y_2) \delta_{y_1} \bigr)\!\varepsilon C \\ 
            &\qquad + \alpha_{m_1} C \!- \delta_{m_1} m_1\,,  
   \end{split}
    \label{equ:uPASystem-unstructured-m1} \\[2mm]
     \frac{\dd m_2}{\dd t} &= \nabla_x \!\cdot\! \left[D_{m_2} \!\nabla_x m_2\right] \!-\! 
            \left((\bar y_1\!\!-\!\bar y_2)\beta_2 m_2 \!-\! \bar y_2 \delta_{y_2} \right)\!\varepsilon C \!+ \alpha_{m_2} m_3 \!-\! \delta_{m_2} m_2\,, \label{equ:uPASystem-unstructured-m2}
     \\[2mm]
      \frac{\dd m_3}{\dd t} &= \nabla_x \!\cdot\! \left[D_{m_3} \nabla_x  m_3\right] 
             + (\bar y_1 - \bar y_2) \alpha_{m_3} \varepsilon C - \delta_{m_3} m_3 \,. \label{equ:uPASystem-unstructured-m3}
\end{align} \label{equ:uPASystem-unstructured}
\end{subequations}

The unstructured system \eqref{equ:uPASystem-unstructured} obtained this way is similar in flavour to the one initially proposed by \citet{Chaplain2005,Chaplain2006}. The first differences appear though in equation~\eqref{equ:uPASystem-unstructured-v} and are due to the fact that our general structured framework~\eqref{equ:generalModel} assumed the simplified scenario for the ECM concentration evolution that is based only on enzymatic degradation and volume filling remodelling. In their special model  \citep{Chaplain2005,Chaplain2006}, the binding and unbinding of the \PAI1 inhibitor to and from the ECM as well as to and from the free uPA is taken into account, as well. These aspects show up also in the subsequent equations of the model proposed in \citet{Chaplain2005,Chaplain2006} concerning the dynamics of uPA, \PAI1, and plasmin, which cause them to differ in this regard from \eqref{equ:uPASystem-unstructured-m1}-\eqref{equ:uPASystem-unstructured-m3}. 

On the other hand, while in \eqref{equ:uPASystem-unstructured-m2} the process of \PAI1 inhibitor $m_2$ leaving the system through binding to the surface-bound uPA is captured by the structured framework \eqref{equ:uPASystem-m2}, in the corresponding equation from \citet{Chaplain2005,Chaplain2006} this is modelled by having the \PAI1 binding to the free uPA. Also, while \citet{Chaplain2005,Chaplain2006} assume a co-localisation of uPA and uPAR to activate plasmin, in our structured case \eqref{equ:uPASystem-m3}, the plasmin $m_3$ is explicitly activated by uninhibited bound uPA $n_1-n_2$, this leading to the non-structured approximation~\eqref{equ:uPASystem-unstructured-m3} expressed by a quantitatively derived proportionality to the cell surface distribution $\varepsilon C$. Future work will explore further similarities and discrepancies between the proposed structured and non-structured uPA models in an integrated computational and analytical approach.




\section{Conclusions and Outlook}\label{sec:conclusion}

In this paper we have established a general \emph{spatio-temporal-structural} framework that allows to describe the interaction of cell population dynamics (i.e. cell movement and proliferation) with molecular binding processes.
Any such structured model is complemented with a corresponding non-struc\-tured, spatio-temporal model. The latter is obtained by integrating the structured model over its \istate space.
Two specific examples, motivated by the process of cancer invasion, illustrate the applicability of the general structured framework and highlight the differences to the corresponding non-structured models.

In the first example, a generic model of cancer invasion, we observe numerically that the overall dynamics of the structured model differs in some regard from the corresponding non-structured one. This finds expression, for example, in a slower or faster degradation of the ECM depending on the amount of bound molecules, a different shape, speed, and intensity of the invading front, or different levels of the free (matrix-degrading) molecules.

In the second example, a model for the uPA-system, we compare the corresponding non-structured model with an existing non-structured model from \citet{Chaplain2005,Chaplain2006}. Our structured model is a more faithful representation of the underlying biology and structural information is inherited by the corresponding non-structured model and may lead to different terms compared to the existing non-structured model. This is evident, for example, in the term modelling the activation of plasmin, which is, as described in the biological literature, activated by cell-membrane bound but uninhibited uPA. While the model from the literature assumes activation via co-localisation of uPA and cancer cells (i.e. uPAR) but does not directly account for the binding to the cell membrane, our non-structured model uses the \istate mean value of the uninhibited bound uPA and thus incorporates, in a condensed form, structural information. Also, while in the existing non-structured model free uPA and \PAI1 are removed 
from the system upon contact as free uPA/\PAI1 complexes, they bind to the cell membrane and accordingly reduce the free uPA and \PAI1 volume concentration in our case; the internalisation of the uPAR/uPA/\PAI1 complex by the cell is discussed further below.

The benefit of this general structured model is that complex biological processes like binding to or unbinding from the cell's surface can be modelled quite naturally. We are able to distinguish between free and bound molecules, which can induce different reaction processes as was motivated biologically by the uPA system. Further, the bound molecules implicitly move with the cells, while the free molecules follow their own brownian motion. Moreover, the corresponding non-structured model, being an approximation of the structured one, inherits some of the structural information. Although the structured ansatz is computationally more expensive due to the additional dimensions of the \istate space, it allows a more realistic modelling of the underlying biological processes. 

The derivation of the general structured model~\eqref{equ:generalModel} as well as the corresponding non-structured model~\eqref{equ:generalNonStructuredModel} is based on a number of assumptions and simplifications. 
We comment on a selection of these in some detail below but leave their thorough discussion for follow-up work.

\paragraph{Spatial flux generalizations.} In the general model~\eqref{equ:generalModel}, we use, for the structured cell density $c$, the spatial flux term~\eqref{equ:spatialFlux}, which consists of a combination of diffusion, chemotaxis, and haptotaxis.

The diffusive flux term in~\eqref{equ:spatialFlux} is chosen as $-D_c\nabla_x c$. The same form is used, for instance, in the work of \citet{Laurencot2008}, who consider an age-structured spatio-temporal model for \emph{proteus mirabilis} swarm-colony development. This form implies that the random motility of cells with a particular \istate $y$ depends only on the gradient of the density of cells having that same \istate.
Instead, one could also think of random motility of cells at a particular \istate $y$ which is governed by the gradient of the total cell density. This would lead to a diffusive flux term of the form $-D_c \nabla_x C$. 

A further generalization of the spatial flux term is to consider cell movement due to cell-cell and cell-matrix adhesive interactions, as is done in a non-structured situation in \citet{Armstrong2006} and \citet{Gerisch2008}. The formulation of the required, so-called adhesion velocity~$\Ard$ will then have to be extended to the structured case and could be defined as
\begin{align*}
   \Ard(t,x,y,\vect u(t,\cdot)) =
   \frac{1}{R}\int\limits_{B(0,R)} \normal(\tilde x)
  \Omega(\nor{\tilde x}_2) g(t,y,\vect u(t,x+\tilde x))\dif{}\tilde x
\end{align*}
with the sensing radius $R>0$, $\normal(\tilde x)$ a unit normal vector pointing from $x$ to $x + \tilde{x}$, and the radial dependency function $\Omega(r)$. 
Similar as in the discussion for the diffusive flux above, cell adhesion occurs not only between cells of the same \istate but between cells of all \istate{}s. Assuming the adhesive strength to be identical for cells of all \istate{}s, the adhesion coefficient function $g$ will have the form
\[
 g(t,\vect{u}) \equiv \vect{g}(t,C,v) = \left[S_{cc}(t)C + S_{cv}(t)v\right] \cdot
  \left(1-\rho(C,v)\right)^+\,,
\]
where we have that $S_{cc}(t)$ represents the cell-cell adhesion coefficient, and $S_{cv}(t)$ denotes the cell-matrix adhesion coefficient.
This \istate-independent adhesion coefficient function coincides with the original one from \citet{Armstrong2006} and \citet{Gerisch2008}, and the time-dependent extension as studied in \citet{Domschke2014}.
In the structured case, cell-cell and cell-matrix adhesion can be influenced by the \istate of the cells, hence the adhesion coefficients would also depend on the \istate{(s)}. 
In order to take all \istate{}s into account, we have to integrate the cell-cell adhesion term over the \istate space. The adhesion coefficient function will then have the form
\begin{multline*}
   g(t,y,\vect u(t,x^*)) = \\
   \left( \int\limits_\P \!\! S_{cc}(t,y,\tilde y) c(t,x^*,\tilde y) \dif{}\tilde y  + S_{cv}(t,y) v(t,x^*)\right) \!\!\cdot\! \Bigl(\!1 \!-\! \rho\bigl(C(t,x^*),v(t,x^*)\bigr)\!\Bigr)^+,
\end{multline*}
where~$S_{cc}(t,y,\tilde y)$ represents the cell-cell adhesion coefficient between cells of \istate{s}~$y$ and~$\tilde y$, respectively. $S_{cv}(t,y)$ denotes the cell-matrix adhesion coefficient of cells with \istate~$y$ and the ECM. 
Both extensions of the spatial flux term need to be analysed in more detail and are subject to further investigation.

\paragraph{Internalisation.} In the general model~\eqref{equ:generalModel} we describe how binding and unbinding of the molecules influence the dynamics of the overall system. Free molecules leave and enter the system due to binding and unbinding, while the structured cell population ``moves'' through the \istate space.
In \citet{Cubellis1990}, it is described that surface-bound uPA/uPAR complexes are internalised and degraded by the cells. To include this mechanism in our model, we would have to add an internalisation term, similar to the unbinding term, to the structural flux~\eqref{equ:structuralFlux}. Since the uPA/uPAR complexes are degraded, they would not reenter the system as they do in the case of unbinding, hence the internalisation term would not appear in the free molecular species equation~\eqref{equ:generalModel-motile}.

\paragraph{Variable receptor density.} In the special case of the uPA system, we assume, following the work of \citet{Chaplain2005,Chaplain2006}, that a cancer cell carries a fixed amount of uPAR on its cell surface. However, one could assume a varying surface density of uPAR due to external influence or active alteration by the cancer cells. \citet{Yang2006}, for example, have shown that subpopulations of colon cancer cells with an initially low cell surface uPAR number can spontaneously develop an oscillating cell surface uPAR density. In our general modelling framework it is possible to capture such a mechanism by adding an additional \istate variable describing the surface concentration of uPAR.

\paragraph{Intracellular reactions.} In this work, we describe how to model surface-bound reactions in a structured-population approach. 
Such a structured approach is also suitable to describe intracellular reactions and the effect of the exchange of molecules between the cell's cytoplasm and the extracellular space or even the cell membrane. 
The corresponding changes in cell state will in many cases have an influence on the cell's behaviour.
These processes can be expressed by making use of a structured cell volume density which is defined assuming a fixed volume for each cell. 
The latter is analogous to the structured cell surface density $s(t,x,y)$, as considered in this work, for which we assume a fixed cell surface area $\varepsilon$.

\paragraph{Higher order approximations in non-structured model.} In the derivation of a non-structured model from the general structured model~\eqref{equ:generalModel} it is necessary to ap\-prox\-i\-mate terms involving structured expressions with expressions that involve only non-structured variables. In Section~\ref{sec:generalDerivationNonStructuredModel}, we use \istate mean values of all corresponding structured terms. Basically it is possible to consider more sophisticated, higher order approximations of these terms in order to incorporate the structural information in a more refined manner.

\paragraph{Structural flux and relation to age-structured models.} Our general model~\eqref{equ:generalModel-cells} as well as age-structured models are hyperbolic in the \istate/age variable.
Accordingly, the prescription of boundary conditions on the \istate space has to be handled with care and is only possible at inflow boundary parts.
The transport coefficient w.r.t. age in age-structured models is constant and uniform and thus the inflow boundary is \emph{a priori} known and no ``crossing of characteristics'' is possible.
In contrast to that, the transport coefficient in our structured model is given by the net binding rate, which depends on the \istate $y$ and the free molecular volume concentrations $\vect{m}$. 
It is thus in general a nonuniform and nonlinear expression and hence changes in the \istate and with time. Thus, firstly, the inflow boundary, where the scalar product of the net binding rate and the unit outward normal vector is negative, may change with time and also a ``crossing of characteristics'' is possible. Further analytical investigations are required to give more insight into these issues and, more general, addressing rigorously the existence, uniqueness, and positivity of solutions of the proposed model.


\appendix

\section{A Measure Theoretic Setting} \label{app:bindingMeasures}

A measure theoretical justification of the binding and unbinding rates introduced to define the structural flux given in \eqref{equ:structuralFlux} is as follows. 
Let $\borel(\P)$ denote the Borel $\sigma-$algebra of the \istate space~$\P$.
In our model, given a density of molecular species~$\vect m(t,x)$, the structural measure of their binding rate to the total cell density~$C(t,x)$ is denoted by $\eta_{\vect b}(\cdot;\vect m):\borel(\P)\to \R^p$ and is assumed to be absolutely continuous with respect to the Lebesgue measure on~$\P$. Then the induced Lebesgue-Radon-Nikodym density 
\begin{align}
   \vec{b}(\cdot;\vect{m}) = \begin{pmatrix}b_1(\cdot;\vect m)\\ \vdots \\ b_p(\cdot;\vect m)\end{pmatrix} : \P \to \R^p\,.
   \label{equ:bindingRate}
\end{align}
is uniquely defined by 
\begin{align}
	\eta_{\vect b} (W;\vect m) &= \int\limits_W \vect b(\gamma;\vect m) \dif \gamma\,, \quad \forall W\in\borel(\P)\,,
	\label{equ:defBindingRate}
\end{align}
\citep{Halmos1978}, and represents the binding rate of the molecular species~$\vect m$ to the cell population density~$c$. 

Similarly, the structural measure of their unbinding rate of the bound molecular species~$\vect n(t,x)$ is denoted by~$\eta_{\vect d}$ and is again assumed to be absolutely continuous with respect to the Lebesgue measure on~$\P$. Thus, this leads to an unbinding rate depending only on the \istate given by the  Lebesgue-Radon-Nikodym density 
\begin{align}
   \vect{d}(\cdot) = \begin{pmatrix}d_1(\cdot)\\ \vdots \\ d_p(\cdot)\end{pmatrix} : \P \to \R^p
   \label{equ:unbindingRate}
\end{align}
is uniquely defined by 
\begin{align}
	\eta_{\vect d} (W) &= \int\limits_W \vect d(\gamma) \dif \gamma\,, \quad \forall W\in\borel(\P)\,.
	\label{equ:defUnbindingRate}
\end{align}



\section{The Source Term for Arbitrary Borel Sets $W\subset \P$} \label{app:sourceTerm}

Let $W \subset \P \subset \R^p$ be an arbitrary Borel set and define $zW := \{zw: w\in W\} $ for $z\in\R$. If $z \neq 0$, we can also write $zW = \{\tilde w: \frac1z \tilde w \in W\}$. 
Assume that $W$, $2W$, and $\frac12 W$ are pairwise disjoint as shown in Fig.~\ref{fig:mitosis}. 
Then, the source of cells in the structural region $W$ that was obtained in \eqref{equ:sourceTermIntegral} reads as
\begin{align}
   \int\limits_W S(t,x,y) \dif y &= 2 \int\limits_{2W} \Phi(\tilde y,\vect u) c(t,x, \tilde y)\dif \tilde y - \int\limits_W \Phi(y,\vect u) c(t,x,y) \dif y\,. \label{equ:sourceTermIntegralAppendix}
\end{align}
Note that we may have to invoke Convention~\ref{conv:istate} in the evaluation of the integral over $2W$. 
The purpose of this appendix is to show that Eq.~\eqref{equ:sourceTermIntegralAppendix} also holds for arbitrary Borel sets $W\subset \P$.
We start with the following technical lemma.
\begin{lemma}
\label{lem:disjointSets}
  Consider a set $A$ such that $A\cap 2A = \emptyset$. Then it holds that
  $\frac12 A\cap A=\emptyset$ 
  and, provided that $A$ is also convex, 
  $\frac12 A\cap 2A=\emptyset$.
\end{lemma}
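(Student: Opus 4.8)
The plan is to prove the two disjointness claims essentially by elementary set arithmetic, working with the characterisation $zA = \{\tilde w : \tfrac1z \tilde w \in A\}$ for $z\neq 0$ given just before the lemma, and only invoking convexity (together with $\mathbf 0$ being an accumulation point of $\P$, hence a limit one may take convex combinations toward — though here we only need convexity of $A$ itself) for the second statement.

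First I would establish $\tfrac12 A \cap A = \emptyset$. Suppose, for contradiction, that some $w$ lies in both $\tfrac12 A$ and $A$. Then $w\in A$ and $2w\in A$ (the latter because $w\in\tfrac12 A$ means $2w\in A$). Hence $w\in A\cap 2A$, contradicting the hypothesis $A\cap 2A=\emptyset$. This step is a one-line rewriting and needs no convexity.

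Next I would establish $\tfrac12 A \cap 2A = \emptyset$ using convexity of $A$. Suppose some $w$ lies in both sets. Then $2w\in A$ (since $w\in\tfrac12 A$) and $\tfrac12 w\in A$ (since $w\in 2A$ gives $\tfrac12 w\in A$). By convexity of $A$, the midpoint of $2w$ and $\tfrac12 w$ lies in $A$; that midpoint is $\tfrac12(2w+\tfrac12 w)=\tfrac54 w$. That is not immediately $w$, so a little more care is needed: instead take the convex combination $\lambda(2w)+(1-\lambda)(\tfrac12 w)$ with $\lambda$ chosen so the result equals $w$, i.e. $2\lambda+\tfrac12(1-\lambda)=1$, giving $\lambda=\tfrac13\in[0,1]$. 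Thus $w\in A$. But we also have $2w\in A$ from above, so $w\in A\cap 2A$, again contradicting the hypothesis. Hence no such $w$ exists.

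The only mild obstacle is the second part: one must pick the correct convex combination ($\lambda=\tfrac13$) of the two points $2w,\tfrac12 w\in A$ to land back on $w$, rather than naively averaging; everything else is immediate from the definition of $zA$. I would also note in passing that the first claim shows $\tfrac12 A\cap A=\emptyset$ and $A\cap 2A=\emptyset$ are equivalent (replace $A$ by $2A$), which is why the hypothesis is stated only in one of the two forms.
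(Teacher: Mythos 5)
Your argument is correct and follows essentially the same route as the paper's proof: the first claim by direct rescaling, the second by writing the common point as the convex combination with coefficient $\tfrac13$ of the two points $2w$ and $\tfrac12 w$ in $A$ (the paper uses exactly this combination, $x=\tfrac23 y+\tfrac13 z$ with $y=\tfrac12 x$, $z=2x$). One cosmetic slip in your first part: from $w\in A$ and $2w\in A$ the element landing in $A\cap 2A$ is $2w$ (since $2w\in A$ and $2w\in 2A$ because $w\in A$), not $w$ itself; the paper phrases this correctly via $y=2x$.
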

\begin{proof}
Suppose there exists an $x\in \frac{1}{2} A\cap A$. Then $x\in A$ and it exists a $y\in A$ such that $x=\frac12 y$. 
This implies that $y=2x$ is also an element of $2A$ and thus $y\in A\cap 2A$, a contradiction. 
Thus $\frac12 A\cap A=\emptyset$ must hold. 

Now suppose there exists an $x\in\frac{1}{2} A\cap 2A$. Then there exist $y,z\in A$ such that $x=\frac12 z$ and $x = 2y$.
Now observe that $x=\alpha y + (1-\alpha)z$ for $\alpha=\frac{2}{3}\in(0,1)$ and thus $x$ can be written as a convex combination of $y$ and $z$.
Since $A$ is convex, we also have $x\in A$. But then, $z=2x\in A\cap 2A$, a contradiction. 
Thus $\frac12 A\cap 2A=\emptyset$ must hold. 
\qed
\end{proof}
This enables us now to prove the following theorem. 

\begin{theorem}
Let $W$ be an arbitrary convex and compact subset of $\P$. Then Eq.~\eqref{equ:sourceTermIntegralAppendix} holds.
\end{theorem}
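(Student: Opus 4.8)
The plan is to reduce an arbitrary convex compact $W\subset\P$ to a \emph{finite} disjoint union of convex subsets on each of which the ``obvious'' case already gives \eqref{equ:sourceTermIntegralAppendix} — namely the case, recalled at the start of this appendix, in which $W$, $2W$ and $\tfrac12 W$ are pairwise disjoint — and then to add the resulting identities. Lemma~\ref{lem:disjointSets} is precisely what makes this reduction cheap: for a \emph{convex} piece it is enough to arrange $W_j\cap 2W_j=\emptyset$, since then the two remaining disjointness relations $\tfrac12 W_j\cap W_j=\emptyset$ and $\tfrac12 W_j\cap 2W_j=\emptyset$ follow automatically.

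First I would manufacture a linear functional that is strictly positive on $W$. Since $W\subset\P$ is compact and $\boldsymbol 0\notin W$ (the origin is only an accumulation point of $\P$, as in the examples, not a member of it; the degenerate case $\boldsymbol 0\in W$ is a Lebesgue-null modification and may be discarded), the separating hyperplane theorem yields a unit vector $e\in\R^p$ and a number $\alpha_0>0$ with $\langle y,e\rangle>\alpha_0$ for all $y\in W$, and compactness of $W$ gives an integer $N$ with $\langle y,e\rangle<\alpha_0 2^{N}$ for all $y\in W$. I would then cut $W$ by dyadic slabs, setting, for $j=0,\dots,N-1$,
\[
  W_j:=W\cap\bigl\{\,y:\ \alpha_0 2^{j}\le\langle y,e\rangle<\alpha_0 2^{j+1}\,\bigr\}\,.
\]
Each $W_j$ is convex (an intersection of the convex set $W$ with a slab), the $W_j$ are pairwise disjoint, and $W=\bigcup_{j=0}^{N-1}W_j$ because $\langle\cdot,e\rangle$ maps $W$ into $(\alpha_0,\alpha_0 2^{N})$. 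Moreover, if $y\in W_j$ then $\langle y,e\rangle<\alpha_0 2^{j+1}$, whereas if $y\in 2W_j$ then $\langle y,e\rangle\ge 2\alpha_0 2^{j}=\alpha_0 2^{j+1}$; hence $W_j\cap 2W_j=\emptyset$.

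By Lemma~\ref{lem:disjointSets}, convexity of $W_j$ together with $W_j\cap 2W_j=\emptyset$ upgrades to full pairwise disjointness of $W_j$, $2W_j$ and $\tfrac12 W_j$, so \eqref{equ:sourceTermIntegralAppendix} holds for each $W_j$. Since the $W_j$ are disjoint, so are the $2W_j$ (the map $y\mapsto 2y$ is injective), with $\bigcup_j W_j=W$ and $\bigcup_j 2W_j=2W$; additivity of the integral over these finite disjoint unions then gives
\begin{align*}
  \int\limits_W S(t,x,y)\dif y
  &= \sum_{j=0}^{N-1}\int\limits_{W_j} S(t,x,y)\dif y
   = \sum_{j=0}^{N-1}\left(2\!\!\int\limits_{2W_j}\!\!\Phi(\tilde y,\vect u)\,c(t,x,\tilde y)\dif\tilde y\ -\!\int\limits_{W_j}\!\Phi(y,\vect u)\,c(t,x,y)\dif y\right)\\
  &= 2\int\limits_{2W}\Phi(\tilde y,\vect u)\,c(t,x,\tilde y)\dif\tilde y\ -\ \int\limits_W\Phi(y,\vect u)\,c(t,x,y)\dif y\,,
\end{align*}
which is \eqref{equ:sourceTermIntegralAppendix}; Convention~\ref{conv:istate} handles the parts of the $2W_j$ (and of $2W$) that fall outside $\P$, the integrand simply being $0$ there, so the splitting of $\int_{2W}$ over the pieces is unproblematic. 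The only genuinely delicate point is the construction in the second step: arranging the decomposition so that the pieces are simultaneously convex (so Lemma~\ref{lem:disjointSets} applies), finitely many (so no convergence argument is needed), and cover $W$ exactly — and this is exactly what the strictly positive functional $\langle\cdot,e\rangle$ plus dyadic cut-points delivers. Everything else — the change of variables $\tilde y=2y$ with $\det(J_{\tilde y})=2^{p}$ already used in the main text, finite additivity of the integral, and the bookkeeping via Convention~\ref{conv:istate} — is routine.
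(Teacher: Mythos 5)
Your argument is correct in substance but takes a genuinely different route from the paper's. The paper peels $W$ into the countable ``onion'' $A_k=W_k\setminus W_{k+1}$ with $W_k=\bigcap_{i=0}^k 2^{-i}W$; because the pieces $A_k$ are not convex, the second part of Lemma~\ref{lem:disjointSets} is not applicable to them and the relation $\frac12 A_k\cap 2A_k=\emptyset$ has to be re-derived by hand from the convexity of $W_k$, and the final step sums a countable series (using $\lambda(W_k)\le 2^{-pk}\lambda(W)$ and the null set $W_\infty=\{0\}\cap W$). Your dyadic-slab decomposition along a strictly positive linear functional $\langle\cdot,e\rangle$ instead produces \emph{finitely many} pieces that are themselves convex, so Lemma~\ref{lem:disjointSets} applies verbatim to each $W_j$, and only finite additivity is needed. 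The verification $W_j\cap 2W_j=\emptyset$ via doubling of $\langle\cdot,e\rangle$, the disjoint covering $2W=\bigcup_j 2W_j$, and the appeal to Convention~\ref{conv:istate} for the parts of $2W_j$ outside $\P$ are all sound. Where it applies, this is a cleaner argument than the paper's.

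The one genuine gap is your parenthetical dismissal of the case $\boldsymbol 0\in W$. The general framework only requires $\boldsymbol 0$ to be an accumulation point of $\P$, and the paper's own proof explicitly admits $\boldsymbol 0\in W$ (this is precisely why it introduces $W_\infty$). If $\boldsymbol 0\in W$, deleting the single point does not repair your construction: $W\setminus\{\boldsymbol 0\}$ still accumulates at the origin, so no $e$ and $\alpha_0>0$ with $\langle y,e\rangle>\alpha_0$ on it exist, and no finite family of slabs covers it; the separating-hyperplane step is exactly what fails. The repair is not difficult but costs you the finiteness you advertise: keep the upper bound $\langle y,e\rangle<\alpha_0 2^{N}$ from compactness, let $j$ range over all integers $j<N$ so that the (still convex, still pairwise disjoint) slabs cover $\{y\in W:\langle y,e\rangle>0\}$, treat the Lebesgue-null slice $\{y\in W:\langle y,e\rangle=0\}$ and, should $\boldsymbol 0$ be interior to $W$, a mirrored family of slabs on the side $\langle y,e\rangle<0$, and invoke countable additivity of the integral --- at which point you are back to a convergence argument of the same flavour as the paper's. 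Either add the hypothesis $\boldsymbol 0\notin W$ to your statement (which suffices for both concrete models in the paper, where $\P$ excludes the origin) or carry out this countable version.
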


\begin{proof}
Since $W \subset \P$ is an arbitrary convex and compact set, the sets $W$, $2W$, and $\frac12 W$ might not be pairwise disjoint. Since $W$ is compact, we have that the Lebesgue measure $\lambda(W)<\infty$.
Furthermore, it holds that $\lambda(zW) = z^{p} \lambda(W)$ for all $z\in\R$.
We define the sequence of sets
\[
 W_k := \bigcap\limits_{i=0}^k 2^{-i} W, \quad k=0,1,2,\dots\,.
\]
These sets are, as intersection of convex sets, convex and have the following properties
\[
  W_0 = W\,,\quad
  W_j \subseteq W_i\text{ for all } j\geq i\,,\quad\text{and}\quad
  \lambda(W_k)\leq 2^{-pk} \lambda(W)\,.
\]
Note that if $0\not\in W$ then there exists a finite $K$ such that $W_k=\emptyset$ for all $k\ge K$, otherwise, if $0\in W$ then $0\in W_k$ for all $k$ and $\lim_{k\to\infty}W_k=\{0\}$.
Therefore, combining both cases, define $W_\infty:=\{0\}\cap W$; clearly $\lambda(W_\infty)=0$.
Thus we can write
\[
 W = W_0 = W_0\setminus W_1 \cupdot W_1 
 = \ldots 
 = \left(\bigcupdot_{i=0}^k W_i \setminus W_{i+1}\right)\cupdot W_{k+1}
 = \left(\bigcupdot_{i=0}^\infty A_i\right) \cupdot W_\infty\,,
\]
where $A_k:=W_k \setminus W_{k+1}$ for $k=0,1,\dots$.
From the definition of the sets $W_k$ we can also deduce the following relation
\[
  2W_k = 2W\cap W_{k-1}\text{ for } k=1,2,\dots\,.
\]
Now, for all $k=0,1,2,\dots$ we obtain
\begin{align*}
  A_k \cap 2 A_k
  &= (W_k \setminus W_{k+1}) \cap (2 W_k) \setminus (2 W_{k+1})\\
  &= (W_k \setminus W_{k+1}) \cap (2 W_k) \setminus (2W\cap W_{k})\\
  &= (W_k \setminus W_{k+1}) \cap \Bigl[\underbrace{((2 W_k)\setminus (2W))}_{=\emptyset} \cup\, ((2W_k)\setminus W_k) \Bigr]\\
  &= (W_k \setminus W_{k+1}) \cap ((2W_k)\setminus W_k)\\
  &= \emptyset\,.
\end{align*}
Following the first part of Lemma~\ref{lem:disjointSets} it now also follows that $\frac{1}{2} A_k \cap A_k =\emptyset$ for $k=0,1,2,\dots$.
The second part of Lemma~\ref{lem:disjointSets} is not applicable here since $A_k$ is in general not convex. 
However, note that the derivation above also shows that
\[
  2 A_k = (2W_k)\setminus W_k \text{ for } k=0,1,2,\dots\,.
\]
Using that relation, once directly and once multiplied by $\frac{1}{4}$, we now obtain, for all $k=0,1,2,\dots$,
\begin{align*}
  \frac{1}{2}A_k \cap 2 A_k
  &=\Biggl(\Bigl(\frac{1}{2}W_k\Bigr)\setminus \Bigl(\frac{1}{4}W_k\Bigr)\Biggr) \cap \bigl((2W_k)\setminus W_k\bigr)\,.
\end{align*}
Now assume that there exists an $x\in \frac{1}{2}A_k \cap 2 A_k$. Then necessarily, $x\in \frac{1}{2}W_k$ and $x\in 2 W_k$.
As in the proof of the second part of Lemma~\ref{lem:disjointSets} it follows, thanks to the convexity of $W_k$, that also $x\in W_k$.
However, then $x\not\in (2W_k)\setminus W_k$ and thus $x\not\in\frac{1}{2}A_k \cap 2 A_k$, a contradiction. 
Thus it also holds that $\frac{1}{2}A_k \cap 2 A_k=\emptyset$.

In summary, it holds that, for each $k=0,1,2,\dots$, the sets $A_k$, $2 A_k$, and $\frac{1}{2} A_k$ are pairwise disjoint and hence Eq.~\eqref{equ:sourceTermIntegralAppendix}
holds with $W$ replaced by $A_k$.

Now we can conclude for our arbitrary convex and compact set $W \subset \P$, that
\begin{align*}
   \int\limits_W S(t,x,y) \dif y 
   &= \int\limits_{\bigcupdot\limits_{i=0}^\infty A_i} S(t,x,y) \dif y = \sum\limits_{i=0}^\infty \int\limits_{A_i} S(t,x,y) \dif y \\
   &\stackrel{\eqref{equ:sourceTermIntegralAppendix}}{=} \sum\limits_{i=0}^\infty \left[ 2 \int\limits_{2A_i} \Phi(\tilde y,\vect u) c(t,x, \tilde y)\dif \tilde y - \int\limits_{A_i} \Phi(y,\vect u) c(t,x,y) \dif y \right]\\
   &= 2 \int\limits_{\bigcupdot\limits_{i=0}^\infty 2A_i} \Phi(\tilde y,\vect u) c(t,x, \tilde y)\dif \tilde y - \int\limits_{\bigcupdot\limits_{i=0}^\infty A_i} \Phi(y,\vect u) c(t,x,y) \dif y\\
   &= 2 \int\limits_{2W} \Phi(\tilde y,\vect u) c(t,x, \tilde y)\dif \tilde y - \int\limits_{W} \Phi(y,\vect u) c(t,x,y) \dif y\,.
\end{align*}
\qed
\end{proof}

Since we have shown that Eq.~\eqref{equ:sourceTermIntegralAppendix} holds for arbitrary convex and compact subsets of $\P$, it in particular also holds for all rectangles, which are a family of generators of the Borelian $\sigma$-algebra on $\P$ \citep{Halmos1978}. Hence it holds for all Borel subsets of $\P$.


\section{Non-Dimensionalisation and Parameter Tables}
\label{app:ParameterTables}

Based on a typical cancer cell volume of $1.5\times 10^{-8}\rm cm^3$, see \citet{Anderson2005} and references cited there, we set
\[
  \vartheta_c = 1.5\times 10^{-8}\rm{cm^3/cell}
\]
and define below the scaling parameter $c_\ast = 1/\vartheta_c = 6.7\times 10^{7}\rm{cells/cm^3}$ as the inverse of $\vartheta_c$, i.e., taken as the maximum cell density such that no overcrowding occurs. 
Assuming that a cell is approximately a sphere, we obtain a surface area of $\varepsilon = 2.94\times 10^{-5}\rm{cm^2/cell}$. 
In \citet{Lodish2007}, the amount of surface receptors is given by a range from 1,000 to 50,000 molecules per cell. We take the upper limit which is translated to $50,000$ $\rm{molecules/cell} = 8.3\times 10^{-14}\rm{\upmu mol/cell}$ and gives a reference surface density of 
\[
   y_\ast = \frac{8.3\times 10^{-14}\rm{\upmu mol/cell}}{2.94\times 10^{-5}\rm{cm^2/cell}}  = 2.82\times 10^{-9}\rm{\upmu mol/cm^2}. 
\]
In \citet{AbreuPalmerMurray2010} it is stated that the collagen density in engineered provisional scaffolds should be between $2$ and $4$~$\rm{mg/cm^3}${} for \emph{in vivo} delivery. We take the upper limit as scaling parameter $v_\ast$ for the ECM density. Assuming that ECM at this density fills up all available physical space, we obtain $1=\rho(\vect{0},v_\ast)=\vartheta_v v_\ast$ and thus
\[
\vartheta_v:=\frac{1}{v_\ast}\,.
\]
The scaling parameters $\tau=1\times 10^{4}\rm{s}$ and $L=0.1\rm{cm}$ are chosen as in \citet{Gerisch2008} and \citet{Domschke2014} and, as in loc.\,cit., the value of the scaling parameter $m_\ast$ remains unspecified. 
Table~\ref{tab:parameters} shows the model parameters with units and their non-dimensionalised counterparts, and intermediate quantities of these can be found in Table~\ref{tab:intermediateQuantities}.


\begin{table}[htb]
   \centering
\begin{tabular}{llll}
  \toprule
    $p$ & unit & $\tilde{p}$ & conditions\\
  \midrule
  \\[-5pt]
    $\varepsilon$ & $\rm{cm^2/cell}$& $\frac{c_* y_*}{m_*}\varepsilon$ & $\varepsilon >0$
    \\[5pt]
    $\vartheta_c$ & $\rm{cm^3/cell}$ &$c_\ast \vartheta_c$ &$\vartheta_c>0$
    \\[5pt]
    $\vartheta_v$ & $\rm{cm^3/mg}$ &$v_\ast \vartheta_v$  & $\vartheta_v>0$
    \\[5pt]
    $D_c$ & $\rm{cm^2/s}$  & $\displaystyle \frac{\tau}{L^2} D_c$ &  $D_c>0$
    \\[8pt]
    $\chi_k$ & $\rm{(cm^2/s)/n M}$ &  $\displaystyle \frac{\tau}{L^2}m_* \chi_k$ & $\chi_k\geq0$, $k=1,\dots,q$
    \\[8pt]
    $\chi_v$ & $\rm{(cm^2/s)/(mg/cm^3)}$  &  $\displaystyle \frac{\tau}{L^2}v_* \chi_v$ & $\chi_v\geq0$
    \\[8pt]
    $\boldsymbol\delta_v$ & $\rm{1/(n M s)}$  & $\tau m_\ast \boldsymbol \delta_v$&
    $\boldsymbol \delta_v \geq 0$
    \\[5pt]
    $\mat D_{\vect m}$ & $\rm{cm^2/s}$ & $\displaystyle \frac{\tau}{L^2} \mat D_{\vect m}$ &
    $\mat D_{\vect m}>0$
    \\[8pt]
    $\boldsymbol \delta_{\vect m}$ & $\rm{1/s}$ & $\displaystyle \tau \boldsymbol\delta_{\vect m}$ &
    $\boldsymbol \delta_{\vect m} \ge 0$
    \\
  \bottomrule
\end{tabular}
\caption{\label{tab:parameters}
Parameters $p$ of the general model \eqref{equ:generalModel}  with their unit and their
non\hyp{}dimensionalised counterparts $\tilde{p}$.}
\end{table}

\begin{table}
   \centering
\begin{tabular}{lllp{5.5cm}}
  \toprule
    $p$ & unit & $\tilde{p}$ &  references/notes\\
  \midrule
  \\[-5pt]
    $\rho(C,v)$   & --- &  $\rho$ & volume fraction of occupied space
    \\[8pt]
    $\Phi(y,\vect{u})$ & $\rm{1/s}$ & $\tau\Phi $
    & \istate-dependent cell proliferation rate
    \\[5pt]
    $\vect b(y,\vect{m})$ & $\rm{(\upmu mol/cm^2)/s}$ & $\displaystyle\frac{\tau}{y_*}\vect b$
    & vector of binding rates of molecular species, $\vect b \geq 0$,
    \\[5pt]
    $\vect d(y)$ & $\rm{(\upmu mol/cm^2)/s}$ & $\displaystyle\frac{\tau}{y_*}\vect d$
    & vector of unbinding/detaching rates of molecular species, $\vect d \geq 0$,
    \\[5pt]
    $\psi_v(t,\vect{u})$ & $\rm{(mg/cm^3)/s}$ & $\displaystyle
    \frac{\tau}{v_\ast}\psi$
    & ECM remodelling law, $\psi_v\ge 0$ if \mbox{$v=0$}
    \\[8pt]
    $\boldsymbol\psi_{\vect m}(\vect u, \vect r)$ & $\rm{n M/s}$  & $\displaystyle
    \frac{\tau}{m_\ast} \boldsymbol\psi_{\vect m}$ 
    & vector of production terms for molecular species, $\boldsymbol\psi_{\vect m}\ge\mathbf{0}$
    \\
  \bottomrule
\end{tabular}
\caption{\label{tab:intermediateQuantities}
Intermediate model quantities $p$ of the general model \eqref{equ:generalModel} with their unit and their non-dimensionalised counterparts $\tilde{p}$.
The latter have to be read, for instance, as follows
$\tilde{\vect{b}}(\tilde{y},\tilde{\vect{m}}) = \frac{\tau}{y_*}\vect b(y,\vect{m})$.}
\end{table}



\begin{acknowledgements}
PD was supported by the Northern Research Partnership PECRE scheme and the Deutsche Forschungsgemeinschaft under the grant DO~1825/1-1. 
DT and AG would like to acknowledge Northern Research Partnership PECRE scheme. 
DT and MAJC gratefully acknowledge the support of the ERC Advanced Investigator Grant 227619, ``M5CGS - From Mutations to Metastases: Multiscale Mathematical Modelling of Cancer Growth and Spread''.
The authors PD, DT, AG, and MAJC would like to thank the Isaac Newton Institute for Mathematical Sciences for its hospitality during the programme ``\emph{Coupling Geometric PDEs with Physics for Cell Morphology, Motility and Pattern Formation}'' supported by EPSRC Grant Number EP/K032208/1. 
\end{acknowledgements}

\bibliographystyle{spbasic}      

\bibliography{DomschkeTrucuGerischChaplain_ArXiv_version}   

\begin{thebibliography}{111}
\providecommand{\natexlab}[1]{#1}
\providecommand{\url}[1]{{#1}}
\providecommand{\urlprefix}{URL }
\expandafter\ifx\csname urlstyle\endcsname\relax
  \providecommand{\doi}[1]{DOI~\discretionary{}{}{}#1}\else
  \providecommand{\doi}{DOI~\discretionary{}{}{}\begingroup
  \urlstyle{rm}\Url}\fi
\providecommand{\eprint}[2][]{\url{#2}}

\bibitem[{Abia et~al(2009)Abia, Angulo, L{\'o}pez-Marcos, and
  L{\'o}pez-Marcos}]{Abia2009}
Abia L, Angulo O, L{\'o}pez-Marcos J, L{\'o}pez-Marcos M (2009) Numerical
  schemes for a size-structured cell population model with equal fission.
  Mathematical and Computer Modelling 50(5--6):653 -- 664,
  \doi{10.1016/j.mcm.2009.05.023}

\bibitem[{Abreu et~al(2010)Abreu, Palmer, and Murray}]{AbreuPalmerMurray2010}
Abreu EL, Palmer MP, Murray MM (2010) Collagen density significantly affects
  the functional properties of an engineered provisional scaffold. J Biomed
  Mater Res Part A 93A(1):150--157, \doi{10.1002/jbm.a.32508}

\bibitem[{Ainseba and Anita(2001)}]{Ainseba2001}
Ainseba B, Anita S (2001) Local exact controllability of the age-dependent
  population dynamics with diffusion. Abstract and Applied Analysis
  6(6):357--368, \doi{10.1155/S108533750100063X}

\bibitem[{Ainseba and Langlais(2000)}]{Ainseba2000}
Ainseba B, Langlais M (2000) On a population dynamics control problem with age
  dependence and spatial structure. Journal of Mathematical Analysis and
  Applications 248(2):455 -- 474, \doi{10.1006/jmaa.2000.6921}

\bibitem[{Al-Omari and Gourley(2002)}]{Al-Omari2002}
Al-Omari J, Gourley S (2002) Monotone travelling fronts in an age-structured
  reaction-diffusion model of a single species. J Math Biol 45(4):294--312,
  \doi{10.1007/s002850200159}

\bibitem[{Allen(2009)}]{Allen2009}
Allen EJ (2009) Derivation of stochastic partial differential equations for
  size- and age-structured populations. Journal of Biological Dynamics
  3(1):73--86, \doi{10.1080/17513750802162754}

\bibitem[{Andasari et~al(2011)Andasari, Gerisch, Lolas, South, and
  Chaplain}]{Andasari2011}
Andasari V, Gerisch A, Lolas G, South AP, Chaplain MA (2011) Mathematical
  modeling of cancer cell invasion of tissue: biological insight from
  mathematical analysis and computational simulation. J Math Biol
  63(1):141--171, \doi{10.1007/s00285-010-0369-1}

\bibitem[{Anderson and Chaplain(1998)}]{Anderson1998}
Anderson A, Chaplain M (1998) Continuous and discrete mathematical models of
  tumor-induced angiogenesis. Bull Math Biol 60(5):857--899,
  \doi{10.1006/bulm.1998.0042}

\bibitem[{Anderson(2005)}]{Anderson2005}
Anderson ARA (2005) A hybrid mathematical model of solid tumour invasion: the
  importance of cell adhesion. IMA Math Med Biol 22(2):163--186,
  \doi{10.1093/imammb/dqi005}

\bibitem[{Anderson et~al(2000)Anderson, Chaplain, Newman, Steele, and
  Thompson}]{Anderson2000}
Anderson ARA, Chaplain MAJ, Newman EL, Steele RJC, Thompson AM (2000)
  Mathematical modelling of tumour invasion and metastasis. J Theor Med
  2(2):129--154, \doi{10.1080/10273660008833042}

\bibitem[{Andreasen et~al(1997)Andreasen, Kj{\o}ller, Christensen, and
  Duffy}]{Andreasen1997}
Andreasen PA, Kj{\o}ller L, Christensen L, Duffy MJ (1997) The urokinase-type
  plasminogen activator system in cancer metastasis: A review. Int J Cancer
  72(1):1--22,
  \doi{10.1002/(SICI)1097-0215(19970703)72:1<1::AID-IJC1>3.0.CO;2-Z}

\bibitem[{Andreasen et~al(2000)Andreasen, Egelund, and
  Petersen}]{Andreasen2000}
Andreasen PA, Egelund R, Petersen HH (2000) The plasminogen activation system
  in tumor growth, invasion, and metastasis. Cell Mol Life Sci 57(1):25--40,
  \doi{10.1007/s000180050497}

\bibitem[{Angulo et~al(2012)Angulo, L{\'o}pez-Marcos, and Bees}]{Angulo2012}
Angulo O, L{\'o}pez-Marcos J, Bees M (2012) Mass structured systems with
  boundary delay: Oscillations and the effect of selective predation. Journal
  of Nonlinear Science 22(6):961--984, \doi{10.1007/s00332-012-9133-6}

\bibitem[{Armstrong et~al(2006)Armstrong, Painter, and
  Sherratt}]{Armstrong2006}
Armstrong NJ, Painter KJ, Sherratt JA (2006) A continuum approach to modelling
  cell--cell adhesion. J Theor Biol 243(1):98 -- 113,
  \doi{10.1016/j.jtbi.2006.05.030}

\bibitem[{Ayati(2000)}]{Ayati2000}
Ayati B (2000) A variable time step method for an age-dependent population
  model with nonlinear diffusion. SIAM Journal on Numerical Analysis
  37(5):1571--1589, \doi{10.1137/S003614299733010X}

\bibitem[{Ayati and Dupont(2002)}]{Ayati2002}
Ayati B, Dupont T (2002) Galerkin methods in age and space for a population
  model with nonlinear diffusion. SIAM Journal on Numerical Analysis
  40(3):1064--1076, \doi{10.1137/S0036142900379679}

\bibitem[{Ayati et~al(2006)Ayati, Webb, and Anderson}]{Ayati2006}
Ayati B, Webb G, Anderson A (2006) Computational methods and results for
  structured multiscale models of tumor invasion. Multiscale Modeling \&
  Simulation 5(1):1--20, \doi{10.1137/050629215}

\bibitem[{Ayati(2006)}]{Ayati2006a}
Ayati BP (2006) A structured-population model of proteus mirabilis swarm-colony
  development. J Math Biol 52(1):93--114, \doi{10.1007/s00285-005-0345-3}

\bibitem[{Basse and Ubezio(2007)}]{Basse2007}
Basse B, Ubezio P (2007) A generalised age- and phase-structured model of human
  tumour cell populations both unperturbed and exposed to a range of cancer
  therapies. Bull Math Biol 69(5):1673--1690, \doi{10.1007/s11538-006-9185-6}

\bibitem[{Basse et~al(2003)Basse, Baguley, Marshall, Joseph, van Brunt, Wake,
  and Wall}]{Basse2003}
Basse B, Baguley BC, Marshall ES, Joseph WR, van Brunt B, Wake G, Wall DJN
  (2003) A mathematical model for analysis of the cell cycle in cell lines
  derived from human tumors. J Math Biol 47(4):295--312,
  \doi{10.1007/s00285-003-0203-0}

\bibitem[{Basse et~al(2004)Basse, Baguley, Marshall, Joseph, van Brunt, Wake,
  and Wall}]{Basse2004}
Basse B, Baguley BC, Marshall ES, Joseph WR, van Brunt B, Wake G, Wall DJ
  (2004) Modelling cell death in human tumour cell lines exposed to the
  anticancer drug paclitaxel. J Math Biol 49(4):329--357,
  \doi{10.1007/s00285-003-0254-2}

\bibitem[{Basse et~al(2005)Basse, Baguley, Marshall, Wake, and
  Wall}]{Basse2005}
Basse B, Baguley B, Marshall E, Wake G, Wall D (2005) Modelling the flow of
  cytometric data obtained from unperturbed human tumour cell lines: Parameter
  fitting and comparison. Bull Math Biol 67(4):815--830,
  \doi{10.1016/j.bulm.2004.10.003}

\bibitem[{B{\'e}lair et~al(1995)B{\'e}lair, Mackey, and Mahaffy}]{Belair1995}
B{\'e}lair J, Mackey MC, Mahaffy JM (1995) Age-structured and two-delay models
  for erythropoiesis. Math Biosci 128(1--2):317 -- 346,
  \doi{10.1016/0025-5564(94)00078-E}

\bibitem[{Bernard et~al(2003)Bernard, Pujo-Menjouet, and Mackey}]{Bernard2003}
Bernard S, Pujo-Menjouet L, Mackey MC (2003) Analysis of cell kinetics using a
  cell division marker: Mathematical modeling of experimental data. Biophys J
  84(5):3414 -- 3424, \doi{10.1016/S0006-3495(03)70063-0}

\bibitem[{Billy et~al(2014)Billy, Clairambaultt, Fercoq, Gaubertt, Lepoutre,
  Ouillon, and Saito}]{Billy2014}
Billy F, Clairambaultt J, Fercoq O, Gaubertt S, Lepoutre T, Ouillon T, Saito S
  (2014) Synchronisation and control of proliferation in cycling cell
  population models with age structure. Mathematics and Computers in Simulation
  96:66 -- 94, \doi{10.1016/j.matcom.2012.03.005}

\bibitem[{Busenberg and Iannelli(1983)}]{Busenberg1983}
Busenberg S, Iannelli M (1983) A class of nonlinear diffusion problems in
  age-dependent population dynamics. Nonlinear Analysis: Theory, Methods \&
  Applications 7(5):501 -- 529, \doi{10.1016/0362-546X(83)90041-X}

\bibitem[{Byrne and Preziosi(2004)}]{Byrne2004}
Byrne HM, Preziosi L (2004) Modelling solid tumour growth using the theory of
  mixtures. Math Med Biol 20:341--366, \doi{10.1093/imammb/20.4.341}

\bibitem[{Calsina and Salda{\~n}a(1995)}]{Calsina1995}
Calsina {\`A}, Salda{\~n}a J (1995) A model of physiologically structured
  population dynamics with a nonlinear individual growth rate. J Math Biol
  33(4):335--364, \doi{10.1007/BF00176377}

\bibitem[{de~Camino-Beck and Lewis(2009)}]{Camino-Beck2009}
de~Camino-Beck T, Lewis M (2009) Invasion with stage-structured coupled map
  lattices: Application to the spread of scentless chamomile. Ecol Model
  220(23):3394 -- 3403, \doi{10.1016/j.ecolmodel.2009.09.003}

\bibitem[{Chaplain and Lolas(2005)}]{Chaplain2005}
Chaplain M, Lolas G (2005) Mathematical modelling of cancer cell invasion of
  tissue: the role of the urokinase plasminogen activation system. Mathematical
  Models and Methods in Applied Sciences 15(11):1685--1734,
  \doi{10.1142/S0218202505000947}

\bibitem[{Chaplain and Lolas(2006)}]{Chaplain2006}
Chaplain MAJ, Lolas G (2006) Mathematical modelling of cancer invasion of
  tissue: Dynamic heterogeneity. Netw Heterog Media 1(3):399--439,
  \doi{10.3934/nhm.2006.1.399}

\bibitem[{Chapman et~al(2007)Chapman, Plank, James, and Basse}]{Chapman2007}
Chapman SJ, Plank MJ, James A, Basse B (2007) A nonlinear model of age and
  size-structured populations with applications to cell cycles. The ANZIAM
  Journal 49:151--169, \doi{10.1017/S144618110001275X}

\bibitem[{Cubellis et~al(1990)Cubellis, Wun, and Blasi}]{Cubellis1990}
Cubellis MV, Wun TC, Blasi F (1990) Receptor-mediated internalization and
  degradation of urokinase is caused by its specific inhibitor {PAI-1}. EMBO J
  9(4):1079--1085

\bibitem[{Cushing(1998)}]{Cushing1998}
Cushing JM (1998) An Introduction to Structured Population Dynamics, CBMS-NSF
  Regional Conference Series in Applied Mathematics, vol~71. SIAM,
  \doi{10.1137/1.9781611970005.ch2}

\bibitem[{Cusulin et~al(2005)Cusulin, Iannelli, and Marinoschi}]{Cusulin2005}
Cusulin C, Iannelli M, Marinoschi G (2005) Age-structured diffusion in a
  multi-layer environment. Nonlinear Analysis: Real World Applications 6(1):207
  -- 223, \doi{10.1016/j.nonrwa.2004.08.006}

\bibitem[{Daukste et~al(2012)Daukste, Basse, Baguley, and Wall}]{Daukste2012}
Daukste L, Basse B, Baguley B, Wall D (2012) Mathematical determination of cell
  population doubling times for multiple cell lines. Bull Math Biol
  74(10):2510--2534, \doi{10.1007/s11538-012-9764-7}

\bibitem[{Deakin and Chaplain(2013)}]{Deakin2013}
Deakin N, Chaplain MAJ (2013) Mathematical modelling of cancer invasion: The
  role of membrane-bound matrix metalloproteinases. Frontiers in Oncology
  3(70), \doi{10.3389/fonc.2013.00070}

\bibitem[{Deisboeck et~al(2011)Deisboeck, Wang, Macklin, and
  Cristini}]{Deisboeck2011}
Deisboeck TS, Wang Z, Macklin P, Cristini V (2011) Multiscale cancer modeling.
  Annu Rev Biomed Eng 13:127--155, \doi{10.1146/annurev-bioeng-071910-124729}

\bibitem[{Delgado et~al(2006)Delgado, Molina-Becerra, and
  Su{\'a}rez}]{Delgado2006}
Delgado M, Molina-Becerra M, Su{\'a}rez A (2006) A nonlinear age-dependent
  model with spatial diffusion. Journal of Mathematical Analysis and
  Applications 313(1):366 -- 380, \doi{10.1016/j.jmaa.2005.09.042}

\bibitem[{Deng and Hallam(2006)}]{Deng2006}
Deng Q, Hallam TG (2006) An age structured population model in a spatially
  heterogeneous environment: Existence and uniqueness theory. Nonlinear
  Analysis: Theory, Methods \& Applications 65(2):379 -- 394,
  \doi{10.1016/j.na.2005.06.019}

\bibitem[{Di~Blasio(1979)}]{Di-Blasio1979}
Di~Blasio G (1979) Non-linear age-dependent population diffusion. J Math Biol
  8(3):265--284, \doi{10.1007/BF00276312}

\bibitem[{Diekmann and Metz(1994)}]{Diekmann1994}
Diekmann O, Metz J (1994) On the reciprocal relationship between life histories
  and population dynamics. In: Levin S (ed) Frontiers in Mathematical Biology,
  Lecture Notes in Biomathematics, vol 100, Springer Berlin Heidelberg, pp
  263--279, \doi{10.1007/978-3-642-50124-1_16}

\bibitem[{Diekmann et~al(1984)Diekmann, Heijmans, and Thieme}]{Diekmann1984}
Diekmann O, Heijmans H, Thieme H (1984) On the stability of the cell size
  distribution. J Math Biol 19(2):227--248, \doi{10.1007/BF00277748}

\bibitem[{Diekmann et~al(1992)Diekmann, Gyllenberg, Metz, and
  Thieme}]{Diekmann1992}
Diekmann O, Gyllenberg M, Metz JAJ, Thieme H (1992) The 'Cumulative'
  Formulation of (Physiologically) Structured Population Models. CWI

\bibitem[{Domschke et~al(2014)Domschke, Trucu, Gerisch, and
  Chaplain}]{Domschke2014}
Domschke P, Trucu D, Gerisch A, Chaplain MAJ (2014) Mathematical modelling of
  cancer invasion: Implications of cell adhesion variability for tumour
  infiltrative growth patterns. J Theor Biol 361:41--60,
  \doi{10.1016/j.jtbi.2014.07.010}

\bibitem[{Duffy(2004)}]{Duffy2004}
Duffy MJ (2004) The urokinase plasminogen activator system: Role in malignancy.
  Curr Pharm Des 10(1):39--49, \doi{10.2174/1381612043453559}

\bibitem[{Engwer et~al(2015)Engwer, Hillen, Knappitsch, and
  Surulescu}]{Engwer2015}
Engwer C, Hillen T, Knappitsch M, Surulescu C (2015) Glioma follow white matter
  tracts: a multiscale dti-based model. J Math Biol 71(3):551--582,
  \doi{10.1007/s00285-014-0822-7}

\bibitem[{Erban and Othmer(2005)}]{Erban2005}
Erban R, Othmer HG (2005) From signal transduction to spatial pattern formation
  in e. coli: A paradigm for multiscale modeling in biology. Multiscale
  Modeling {\&} Simulation 3(2):362--394, \doi{10.1137/040603565}

\bibitem[{Fitzgibbon et~al(1995)Fitzgibbon, Parrott, and Webb}]{Fitzgibbon1995}
Fitzgibbon W, Parrott M, Webb G (1995) Diffusion epidemic models with
  incubation and crisscross dynamics. Math Biosci 128(1--2):131 -- 155,
  \doi{10.1016/0025-5564(94)00070-G}

\bibitem[{von Foerster(1959)}]{Foerster1959}
von Foerster H (1959) Some remarks on changing populations. In: Stohlman JF
  (ed) The Kinetics of Cellular Proliferation, Grune and Stratton, New York, pp
  382--407

\bibitem[{Foley and Mackey(2009)}]{Foley2009}
Foley C, Mackey M (2009) Dynamic hematological disease: a review. J Math Biol
  58(1-2):285--322, \doi{10.1007/s00285-008-0165-3}

\bibitem[{F\"orste(1978)}]{Diekmann1982}
F\"orste J (1978) {Diekmann, O. / Temme, N. M. (Hrsg.), Nonlinear Diffusion
  Problems. Amsterdam. Mathematisch Centrum.} ZAMM 58(12):583--584,
  \doi{10.1002/zamm.19780581223}

\bibitem[{Gabriel et~al(2012)Gabriel, Garbett, Quaranta, Tyson, and
  Webb}]{Gabriel2012}
Gabriel P, Garbett SP, Quaranta V, Tyson DR, Webb GF (2012) The contribution of
  age structure to cell population responses to targeted therapeutics. J Theor
  Biol 311(0):19 -- 27, \doi{10.1016/j.jtbi.2012.07.001}

\bibitem[{Garroni and Langlais(1982)}]{Garroni1982}
Garroni MG, Langlais M (1982) Age-dependent population diffusion with external
  constraint. J Math Biol 14(1):77--94, \doi{10.1007/BF02154754}

\bibitem[{Gatenby and Gawlinski(1996)}]{Gatenby1996}
Gatenby RA, Gawlinski ET (1996) A reaction-diffusion model of cancer invasion.
  Cancer Res 56:5745--5753

\bibitem[{Gerisch and Chaplain(2008)}]{Gerisch2008}
Gerisch A, Chaplain M (2008) Mathematical modelling of cancer cell invasion of
  tissue: Local and non-local models and the effect of adhesion. J Theor Biol
  250(4):684 -- 704, \doi{10.1016/j.jtbi.2007.10.026}

\bibitem[{Gurtin and MacCamy(1981)}]{Gurtin1981}
Gurtin M, MacCamy R (1981) Diffusion models for age-structured populations.
  Math Biosci 54(1--2):49 -- 59, \doi{10.1016/0025-5564(81)90075-4}

\bibitem[{Gwiazda and Marciniak-Czochra(2010)}]{Gwiazda2010}
Gwiazda P, Marciniak-Czochra A (2010) Structured population equations in metric
  spaces. Journal of Hyperbolic Differential Equations 07(04):733--773,
  \doi{10.1142/S021989161000227X}

\bibitem[{Gyllenberg(1982)}]{Gyllenberg1982}
Gyllenberg M (1982) Nonlinear age-dependent population dynamics in continuously
  propagated bacterial cultures. Math Biosci 62(1):45 -- 74,
  \doi{10.1016/0025-5564(82)90062-1}

\bibitem[{Gyllenberg(1983)}]{Gyllenberg1983}
Gyllenberg M (1983) Stability of a nonlinear age-dependent population model
  containing a control variable. SIAM Journal on Applied Mathematics
  43(6):1418--1438, \urlprefix\url{http://www.jstor.org/stable/2101185}

\bibitem[{Gyllenberg(1986)}]{Gyllenberg1986}
Gyllenberg M (1986) The size and scar distributions of the yeast saccharomyces
  cerevisiae. J Math Biol 24(1):81--101, \doi{10.1007/BF00275722}

\bibitem[{Gyllenberg and Hanski(1997)}]{Gyllenberg1997a}
Gyllenberg M, Hanski I (1997) Habitat deterioration, habitat destruction, and
  metapopulation persistence in a heterogenous landscape. Theor Popul Biol
  52(3):198 -- 215, \doi{10.1006/tpbi.1997.1333}

\bibitem[{Gyllenberg and Webb(1987)}]{Gyllenberg1987}
Gyllenberg M, Webb G (1987) Age-size structure in populations with quiescence.
  Math Biosci 86(1):67 -- 95, \doi{10.1016/0025-5564(87)90064-2}

\bibitem[{Gyllenberg and Webb(1990)}]{Gyllenberg1990}
Gyllenberg M, Webb G (1990) A nonlinear structured population model of tumor
  growth with quiescence. J Math Biol 28(6):671--694, \doi{10.1007/BF00160231}

\bibitem[{Gyllenberg et~al(1997)Gyllenberg, Hanski, and
  Lindstr{\"o}m}]{Gyllenberg1997}
Gyllenberg M, Hanski I, Lindstr{\"o}m T (1997) Continuous versus discrete
  single species population models with adjustable reproductive strategies.
  Bull Math Biol 59(4):679--705, \doi{10.1007/BF02458425}

\bibitem[{Gyllenberg et~al(2002)Gyllenberg, Osipov, and
  P{\"a}iv{\"a}rinta}]{Gyllenberg2002}
Gyllenberg M, Osipov A, P{\"a}iv{\"a}rinta L (2002) The inverse problem of
  linear age-structured population dynamics. Journal of Evolution Equations
  2(2):223--239, \doi{10.1007/s00028-002-8087-9}

\bibitem[{Halmos(1978)}]{Halmos1978}
Halmos PR (1978) Measure Theory, 2nd edn. Springer

\bibitem[{Huang(1994)}]{Huang1994}
Huang C (1994) An age-dependent population model with nonlinear diffusion in
  {$\mathbf{R}^n$}. Quart Appl Math 52:377--398

\bibitem[{Huyer(1994)}]{Huyer1994}
Huyer W (1994) A size-structured population-model with dispersion. Journal of
  Mathematical Analysis and Applications 181(3):716 -- 754,
  \doi{10.1006/jmaa.1994.1054}

\bibitem[{Kelkel and Surulescu(2012)}]{Kelkel2012}
Kelkel J, Surulescu C (2012) A multiscale approach to cell migration in tissue
  networks. Mathematical Models and Methods in Applied Sciences
  22(03):1150,017, \doi{10.1142/S0218202511500175}

\bibitem[{Kunisch et~al(1985)Kunisch, Schappacher, and Webb}]{Kunisch1985}
Kunisch K, Schappacher W, Webb GF (1985) Nonlinear age-dependent population
  dynamics with random diffusion. Computers \& Mathematics with Applications
  11(1--3):155 -- 173, \doi{10.1016/0898-1221(85)90144-0}

\bibitem[{Langlais(1988)}]{Langlais1988}
Langlais M (1988) Large time behavior in a nonlinear age-dependent population
  dynamics problem with spatial diffusion. J Math Biol 26(3):319--346,
  \doi{10.1007/BF00277394}

\bibitem[{Langlais and Milner(2003)}]{Langlais2003}
Langlais M, Milner FA (2003) Existence and uniqueness of solutions for a
  diffusion model of host--parasite dynamics. Journal of Mathematical Analysis
  and Applications 279(2):463 -- 474, \doi{10.1016/S0022-247X(03)00020-9}

\bibitem[{Lauren{\c c}ot and Walker(2008)}]{Laurencot2008}
Lauren{\c c}ot P, Walker C (2008) An age and spatially structured population
  model for proteus mirabilis swarm-colony development. Math Model Nat Phenom
  3(7):49--77, \doi{10.1051/mmnp:2008041}

\bibitem[{Lewis et~al(2010)Lewis, Nelson, and Xu}]{Lewis2010}
Lewis M, Nelson W, Xu C (2010) A structured threshold model for mountain pine
  beetle outbreak. Bull Math Biol 72(3):565--589,
  \doi{10.1007/s11538-009-9461-3}

\bibitem[{Lodish et~al(2007)Lodish, Berk, Kaiser, Krieger, Scott, Bretscher,
  Ploegh, and Matsudaira}]{Lodish2007}
Lodish H, Berk A, Kaiser CA, Krieger M, Scott MP, Bretscher A, Ploegh H,
  Matsudaira P (2007) Molecular Cell Biology, 6th edn. W.H.Freeman

\bibitem[{MacCamy(1981)}]{MacCamy1981}
MacCamy R (1981) A population model with nonlinear diffusion. Journal of
  Differential Equations 39(1):52 -- 72, \doi{10.1016/0022-0396(81)90083-8}

\bibitem[{Mackey and Glass(1977)}]{Mackey1977}
Mackey M, Glass L (1977) Oscillation and chaos in physiological control
  systems. Science 197(4300):287--289, \doi{10.1126/science.267326}

\bibitem[{Macklin et~al(2009)Macklin, McDougall, Anderson, Chaplain, Cristini,
  and Lowengrub}]{Macklin2009}
Macklin P, McDougall SR, Anderson ARA, Chaplain MAJ, Cristini V, Lowengrub J
  (2009) Multiscale modelling and nonlinear simulation of vascular tumour
  growth. J Math Biol 58:765--798, \doi{10.1007/s00285-008-0216-9}

\bibitem[{Mahaffy et~al(1998)Mahaffy, B{\'e}lair, and Mackey}]{Mahaffy1998}
Mahaffy JM, B{\'e}lair J, Mackey MC (1998) Hematopoietic model with moving
  boundary condition and state dependent delay: Applications in erythropoiesis.
  J Theor Biol 190(2):135 -- 146, \doi{10.1006/jtbi.1997.0537}

\bibitem[{Marciniak-Czochra and Ptashnyk(2008)}]{Marciniak-Czochra2008}
Marciniak-Czochra A, Ptashnyk M (2008) Derivation of a macroscopic
  receptor-based model using homogenization techniques. SIAM Journal on
  Mathematical Analysis 40(1):215--237, \doi{10.1137/050645269}

\bibitem[{Matter et~al(2002)Matter, Hanski, and Gyllenberg}]{Matter2002}
Matter SF, Hanski I, Gyllenberg M (2002) A test of the metapopulation model of
  the species--area relationship. Journal of Biogeography 29(8):977--983,
  \doi{10.1046/j.1365-2699.2002.00748.x}

\bibitem[{Mercker et~al(2013)Mercker, Marciniak-Czochra, Richter, and
  Hartmann}]{Mercker2013}
Mercker M, Marciniak-Czochra A, Richter T, Hartmann D (2013) Modeling and
  computing of deformation dynamics of inhomogeneous biological surfaces. SIAM
  Journal on Applied Mathematics 73(5):1768--1792, \doi{10.1137/120885553}

\bibitem[{Metz and Diekmann(1986)}]{Metz1986}
Metz JAJ, Diekmann O (1986) The Dynamics of Physiologically Structured
  Populations, Lecture Notes in Biomathematics, vol~68. Springer-Verlag

\bibitem[{Othmer and Xue(2013)}]{Othmer2013}
Othmer HG, Xue C (2013) Dispersal, Individual Movement and Spatial Ecology: A
  Mathematical Perspective, Springer Berlin Heidelberg, Berlin, Heidelberg,
  chap The Mathematical Analysis of Biological Aggregation and Dispersal:
  Progress, Problems and Perspectives, pp 79--127.
  \doi{10.1007/978-3-642-35497-7_4}

\bibitem[{Othmer et~al(1988)Othmer, Dunbar, and Alt}]{Othmer1988}
Othmer HG, Dunbar SR, Alt W (1988) Models of dispersal in biological systems. J
  Math Biol 26(3):263--298, \doi{10.1007/BF00277392}

\bibitem[{Parsons et~al(1997)Parsons, Watson, Brown, Collins, and
  Steele}]{Parsons1997}
Parsons SL, Watson SA, Brown PD, Collins HM, Steele RJ (1997) Matrix
  metalloproteinases. Brit J Surg 84(2):160--166,
  \doi{10.1046/j.1365-2168.1997.02719.x}

\bibitem[{Pepper(2001)}]{Pepper2001}
Pepper MS (2001) Role of the matrix metalloproteinase and plasminogen
  activator-plasmin systems in angiogenesis. Arterioscl Throm Vas
  21(7):1104--1117, \doi{10.1161/hq0701.093685}

\bibitem[{Perthame(2007)}]{Perthame2007}
Perthame B (2007) Transport Equations in Biology. Frontiers in Mathematics,
  {Birkh\"auser}

\bibitem[{Ramis-Conde et~al(2008)Ramis-Conde, Drasdo, Anderson, and
  Chaplain}]{Ramis-Conde2008}
Ramis-Conde I, Drasdo D, Anderson ARA, Chaplain MAJ (2008) Modeling the
  influence of the e-cadherin-{$\beta$}-catenin pathway in cancer cell
  invasion: A multiscale approach. Biophys J 95(1):155--165,
  \doi{10.1529/biophysj.107.114678}

\bibitem[{Rhandi(1998)}]{Rhandi1998}
Rhandi A (1998) Positivity and stability for a population equation with
  diffusion on $l^1$. Positivity 2(2):101--113, \doi{10.1023/A:1009721915101}

\bibitem[{Rhandi and Schnaubelt(1999)}]{Rhandi1999}
Rhandi A, Schnaubelt R (1999) Asymptotic behaviour of a non-autonomous
  population equation with diffusion in {$L^1$}. Discrete and Continuous
  Dynamical Systems - Series A 5(3):663--683, \doi{10.3934/dcds.1999.5.663}

\bibitem[{Roeder et~al(2009)Roeder, Herberg, and Horn}]{Roeder2009}
Roeder I, Herberg M, Horn M (2009) An ``age'' structured model of hematopoietic
  stem cell organization with application to chronic myeloid leukemia. Bull
  Math Biol 71(3):602--626, \doi{10.1007/s11538-008-9373-7}

\bibitem[{de~Roos(1997)}]{Roos1997}
de~Roos AM (1997) A gentle introduction to physiologically structured
  population models. In: Tuljapurkar S, Caswell H (eds) Structured-Population
  Models in Marine, Terrestrial, and Freshwater Systems, Population and
  Community Biology Series, vol~18, Springer US, pp 119--204,
  \doi{10.1007/978-1-4615-5973-3_5}

\bibitem[{Sinko and Streifer(1967)}]{Sinko1967}
Sinko JW, Streifer W (1967) A new model for age-size structure of a population.
  Ecology 48(6):910--918, \doi{10.2307/1934533}

\bibitem[{Skellam(1951)}]{Skellam1951}
Skellam JG (1951) Random dispersal in theoretical populations. Biometrika
  38(1/2):196--218, \doi{10.2307/2332328}

\bibitem[{So et~al(2001)So, Wu, and Zou}]{So2001}
So JWH, Wu J, Zou X (2001) A reaction-diffusion model for a single species with
  age structure. i travelling wavefronts on unbounded domains. Proceedings of
  the Royal Society of London A: Mathematical, Physical and Engineering
  Sciences 457(2012):1841--1853, \doi{10.1098/rspa.2001.0789}

\bibitem[{Trucco(1965{\natexlab{a}})}]{Trucco1965a}
Trucco E (1965{\natexlab{a}}) Mathematical models for cellular systems the von
  foerster equation. part i. The Bulletin of Mathematical Biophysics
  27(3):285--304, \doi{10.1007/BF02478406}

\bibitem[{Trucco(1965{\natexlab{b}})}]{Trucco1965b}
Trucco E (1965{\natexlab{b}}) Mathematical models for cellular systems. the von
  foerster equation. part ii. The Bulletin of Mathematical Biophysics
  27(4):449--471, \doi{10.1007/BF02476849}

\bibitem[{Trucu et~al(2013)Trucu, Lin, Chaplain, and Wang}]{Trucu2013}
Trucu D, Lin P, Chaplain MAJ, Wang Y (2013) A multiscale moving boundary model
  arising in cancer invasion. Multiscale Model Sim 11(1):309--335,
  \doi{10.1137/110839011}

\bibitem[{Tucker and Zimmerman(1988)}]{Tucker1988}
Tucker SL, Zimmerman SO (1988) A nonlinear model of population dynamics
  containing an arbitrary number of continuous structure variables. SIAM
  Journal on Applied Mathematics 48(3):pp. 549--591,
  \urlprefix\url{http://www.jstor.org/stable/2101595}

\bibitem[{Ulisse et~al(2009)Ulisse, Baldini, Sorrenti, and
  D'Armiento}]{Ulisse2009}
Ulisse S, Baldini E, Sorrenti S, D'Armiento M (2009) The urokinase plasminogen
  activator system: A target for anti-cancer therapy. Curr Cancer Drug Targets
  9(1):32--71, \doi{10.2174/156800909787314002}

\bibitem[{Walker(2007)}]{Walker2007}
Walker C (2007) Global well-posedness of a haptotaxis model with spatial and
  age structure. Differential and Integral Equations 20(9):1053--1074,
  \urlprefix\url{http://projecteuclid.org/euclid.die/1356039311}

\bibitem[{Walker(2008)}]{Walker2008}
Walker C (2008) Global existence for an age and spatially structured haptotaxis
  model with nonlinear age-boundary conditions. European Journal of Applied
  Mathematics 19:113--147, \doi{10.1017/S095679250800733X}

\bibitem[{Walker(2009)}]{Walker2009}
Walker C (2009) Positive equilibrium solutions for age- and
  spatially-structured population models. SIAM Journal on Mathematical Analysis
  41(4):1366--1387, \doi{10.1137/090750044}

\bibitem[{Webb(2008)}]{Webb2008}
Webb G (2008) Population models structured by age, size, and spatial position.
  In: Magal P, Ruan S (eds) Structured Population Models in Biology and
  Epidemiology, Lecture Notes in Mathematics, vol 1936, Springer Berlin
  Heidelberg, pp 1--49, \doi{10.1007/978-3-540-78273-5_1}

\bibitem[{Webb(1985)}]{Webb1985}
Webb GF (1985) Theory of Nonlinear Age-dependent Population Dynamics, Pure and
  Applied Mathematics, vol~89. Marcel Dekker, New York

\bibitem[{Xue(2015)}]{Xue2015}
Xue C (2015) Macroscopic equations for bacterial chemotaxis: integration of
  detailed biochemistry of cell signaling. J Math Biol 70(1):1--44,
  \doi{10.1007/s00285-013-0748-5}

\bibitem[{Xue et~al(2009)Xue, Othmer, and Erban}]{Xue2009}
Xue C, Othmer HG, Erban R (2009) From individual to collective behavior of
  unicellular organisms: Recent results and open problems. AIP Conference
  Proceedings 1167(1):3--14, \doi{10.1063/1.3246413}

\bibitem[{Xue et~al(2011)Xue, Hwang, Painter, and Erban}]{Xue2011}
Xue C, Hwang HJ, Painter KJ, Erban R (2011) Travelling waves in hyperbolic
  chemotaxis equations. Bull Math Biol 73(8):1695--1733,
  \doi{10.1007/s11538-010-9586-4}

\bibitem[{Yang et~al(2006)Yang, Avila, Wang, Trevino, Gallick, Kitadai, Sasaki,
  and Boyd}]{Yang2006}
Yang L, Avila H, Wang H, Trevino J, Gallick GE, Kitadai Y, Sasaki T, Boyd DD
  (2006) Plasticity in urokinase-type plasminogen activator receptor (upar)
  display in colon cancer yields metastable subpopulations oscillating in cell
  surface upar density -- implications in tumor progression. Cancer Research
  66(16):7957--7967, \doi{10.1158/0008-5472.CAN-05-3208}

\end{thebibliography}

\end{document}